\newcommand{\pathonecolor}{Blue!40}
\newcommand{\pathtwocolor}{Red!60}
\newcommand{\paththreecolor}{Yellow!60}
\newcommand{\pathfourcolor}{Brown!50}
\newcommand{\pathfivecolor}{Purple!60} 
\newcommand{\pathsixcolor}{ForestGreen!80}
\newcommand{\flag}[2][1.5cm]{%
  \includegraphics[width=#1]{#2}%
}
\newtheorem{lemma}{Lemma}
\newtheorem{theorem}{Theorem}
\crefname{lemma}{Lemma}{Lemmata}
\tikzset{every picture/.append style={semithick, >=stealth}}
\tikzset{every node/.style={circle, inner sep = 2pt}}
\newcommand{\paths}{\mathcal{P}}
\newcommand{\antichains}{\mathcal{A}}
\newcommand{\chains}{\mathcal{C}}
\newcommand{\ints}{\mathbb{Z}}
\newcommand{\sets}{\mathcal{S}}
\title{Maximum Coverage $k$-Antichains and Chains: A Greedy Approach} %TODO Please add
\author{Manuel Cáceres\thanks{Department of Computer Science, Aalto University, Finland and Department of Mathematics and Computer Science, University of Southern Denmark, Denmark. Supported by the Helsinki Institute for Information Technology HIIT.} \qquad Andreas Grigorjew \thanks{LAMSADE, CNRS UMR7243, Université Paris Dauphine-PSL, 75775 Paris, France and Institute of Informatics, University of Warsaw, Poland. Supported by ANR project ANR-21-CE48-0022 (S-EX-AP-PE-AL).} \qquad Wanchote Po Jiamjitrak\thanks{Department of Computer Science, University of Helsinki, Finland.} \\
Alexandru I. Tomescu\thanks{Department of Computer Science, University of Helsinki, Finland. Co-funded by the European Union (ERC, SCALEBIO, 101169716). Views and opinions expressed are however those of the author(s) only and do not necessarily reflect those of the European Union or the European Research Council. Neither the European Union nor the granting authority can be held responsible for them. Co-funded also by the Research Council of Finland grants No.~346968, 358744.\flag{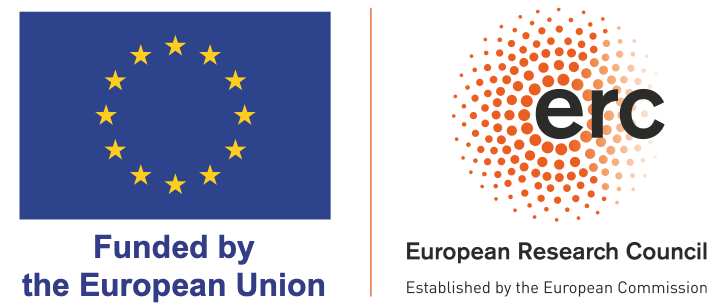}}}
\date{}
\begin{document}

\maketitle

\begin{abstract}
Given an acyclic digraph $G = (V,E)$ and a positive integer $k$, the problem of Maximum Coverage $k$-Antichains (resp. Chains) denoted as MA-$k$ (resp. MC-$k$) asks to find $k$ sets of pairwise unreachable vertices, known as antichains (resp. $k$ subsequences of paths, known as chains), maximizing the number $\alpha_k$ (resp. $\beta_k$) of vertices covered by these antichains (resp. chains). While MC-$k$ was solved in almost optimal $|E|^{1+o(1)}$ time~[Kogan and Parter, ICALP'22], the fastest algorithms for MA-$k$ are a $(k|E|)^{1+o(1)}$-time solution and a $|E|^{1+o(1)}$-time $1/2$ approximation~[Kogan and Parter, ESA'24].

We simplify and improve previous results. Specifically, we obtain the following for MA-$k$:
\begin{itemize}
    \item An algorithm running in $|E|^{1+o(1)}$ time, and an algorithm running in \emph{parameterized near-linear} $\tilde{O}(\alpha_k 
    |E|)$ time. Our algorithms are simple solutions exploiting a paths-based proof of the Greene-Kleitman theorems leveraged by the \texttt{greedy} algorithm for set cover as well as recent advances in fast algorithms for flows and shortest paths.
    \item An approximation algorithm running in \emph{parameterized linear} time $O(\alpha_1^2|V| + (\alpha_1+k)|E|)$ with approximation ratio of $(1-1/e) > 0.63 > 1/2$, beating the state-of-the-art $1/2$ approximation. Our solution uses \texttt{greedy} for antichains and a simple strategy to amortize the cost of computing consecutive maximum antichains.
\end{itemize}
Additionally, we obtain analogous results for MC-$k$ as well as the corresponding dual problems derived from the Greene-Kleitman theorems, which might be of independent interest.

We complement these results with two examples (one for chains and one for antichains) showing that, for every $k \ge 2$, \texttt{greedy} misses the tight $1/e$ portion of the optimal coverage for chains, and a $1/4$ portion for antichains. We also show that \texttt{greedy} is a $\Omega(\log{|V|})$ factor away from minimality when required to cover all vertices: previously unknown for sets of chains or antichains.
\end{abstract}

\section{Introduction}

For a directed acyclic graph (DAG) $G = (V, E)$, a chain is a subsequence of the vertices of a path (equivalently a path in the transitive closure $TC(G)$) and an antichain is a set of vertices that are pairwise unreachable (equivalently, an independent set in $TC(G)$). The problem of Maximum Coverage $k$-antichains (MA-$k$) asks to find a set of $k$ pairwise disjoint antichains whose union size is maximized: we say that one such optimal solution \emph{covers} the $\alpha_k$ vertices in this union. The problem of Maximum Coverage $k$-chains (MC-$k$) is analogously defined and every optimal solution covers $\beta_k$ vertices.

Strongly related to these coverage problems are the minimum partition problems: find the minimum number of pairwise disjoint antichains (analogously chains) partitioning the vertices of $G$, MAP (resp. MCP). Mirsky~\cite{mirsky1971dual} and Dilworth~\cite{dilworth1987decomposition} were the first to draw a connection between the maximum coverage and minimum partition problems. On the one hand, Mirsky proved that the number of antichains in a MAP equals $\beta_1$, also known as the \emph{height} of $G$~\cite{mirsky1971dual}. On the other hand, Dilworth proved that the number of chains in a MCP equals $\alpha_1$, also known as the \emph{width} of $G$~\cite{dilworth1987decomposition}. Later, Greene and Kleitman~\cite{greene1976structure,greene1976some} completed this connection with the celebrated Greene-Kleitman (GK) theorems\footnote{The generalization of Mirsky's results is due to Greene only~\cite{greene1976some}, we use GK to refer to both for simplicity.}: $\alpha_k$ (resp. $\beta_k$) is equal to the minimum \emph{$k$-norm} of a chain (resp. antichain) partition\footnote{We provide a formal definition of $k$-norm in \Cref{sec:preliminaries}.}. We denote partitions with minimum $k$-norm by MCP-$k$ and MAP-$k$, respectively.

From an algorithmic perspective, on the one hand, it is a well-known result that we can solve MAP and MC-$1$ in linear time: by a simple DP one can compute the length of the longest path ending at each vertex (solving MC-$1$), moreover a partition of the vertices according to these lengths gives a MAP. On the other hand, Fulkerson~\cite{fulkerson1956note} was the first to show a polynomial-time algorithm for MCP and MA-$1$ by reducing the problem to maximum bipartite matching. However, Fulkerson's algorithm computes a maximum matching over $TC(G)$. Modern solutions avoid working directly on $TC(G)$ since its size can be quadratic in $G$. The state-of-the-art solutions for MCP and MA-$1$ are the almost linear $|E|^{1+o(1)}$-time solution of Cáceres~\cite{caceres2023minimum}, the parameterized linear $O(\alpha_1^2|V|+|E|)$-time algorithm of Cáceres et~al.~\cite{caceres2022sparsifying,caceres2022minimum} and the parameterized near-linear $\tilde{O}(\alpha_1|E|)$-time of Mäkinen et~al.~\cite{makinen2019sparse}. The latter computes a \texttt{greedy} chain partition, which is then shrunk to minimum size. In fact, this \texttt{greedy} solution is exactly the output of the well-known \texttt{greedy} set cover~\cite{chvatal1979greedy} with sets corresponding to the chains of the DAG and thus it inherits the $O(\log{n})$ approximation ratio. This idea was originally proposed by Felsner et~al.~\cite{felsner2003recognition} for transitive DAGs, and it was later optimized for general DAGs~\cite{kowaluk2008path,makinen2019sparse}, avoiding the computation of $TC(G)$. 

For larger $k$, MC-$k$~\cite{kogan2022beating,caceres2023minimum} has been solved in almost optimal $|E|^{1+o(1)}$ time via a reduction to Minimum Cost Flow. However, until recently, the fastest solution for MA-$k$ was an $O(|V|^3)$-time algorithm from the 80's by Gavril~\cite{gavril1987algorithms}. The big difference in the running times between both problems was tantalizing and Kogan and Parter~\cite{kogan2024algorithmic} tackled this gap by exploiting algorithmic properties of the GK theorems. They present an elegant $1/2$ approximation running in time $|E|^{1+o(1)}$, as well as an approximation scheme (only for transitive DAGs) running in time $O(\alpha_k\sqrt{|E|}|V|^{o(1)}/\epsilon + |V|)$. The same paper also shows an exact solution running in $(k|E|)^{1+o(1)}$ time beating the $O(|V|^3)$-time algorithm of Gavril.

The original proof of the GK theorems by Greene and Kleitman~\cite{greene1976structure,greene1976some} was based on lattice theoretic methods, with little algorithmic insight. Saks~\cite{saks1979short} presented an elegant combinatorial proof by using the product between a poset and a chain of length $k$, which is exploited in the state-of-the-art $(k|E|)^{1+o(1)}$-time solution~\cite{kogan2024algorithmic} for MA-$k$. Later, Frank~\cite{frank1980chain} presented an algorithmic proof based on a reduction to minimum cost flow on the same graph used in Fulkerson's proof~\cite{fulkerson1956note}. As such, the corresponding algorithm runs on $TC(G)$, making it too slow. However, the structural insights discovered in this proof are exploited in the state-of-the-art $|E|^{1+o(1)}$-time $1/2$ approximation~\cite{kogan2024algorithmic}.

\subsection{Overview of our results}

In this paper we leverage a less known algorithmic proof of the GK theorems found in Schrijver's book~\cite[Theorem 14.8 \& Theorem 14.10]{schrijver2003combinatorial}. These proofs are also based on reductions to minimum cost flow that work directly on $G$, avoiding the computation of $TC(G)$. 
We argue that utilizing adaptations of Schrijver's proofs and \texttt{greedy} strategies provides a computationally fast and simple framework to solve MCP-$k$ MA-$k$, MC-$k$, and MAP-$k$, by reducing them to a constant number of minimum cost flow computations. Our first results are then obtained by using the recent results on minimum cost flow~\cite{chen2022maximum,van2023deterministic}.

\begin{restatable}{theorem}{almostlinearalgo}\label{thm:almostLinear}
    Given a DAG $G = (V, E)$ and a positive integer $k$, we can solve the problems MCP-$k$, MA-$k$, MC-$k$ and MAP-$k$ in almost optimal $|E|^{1+o(1)}$ time.
\end{restatable}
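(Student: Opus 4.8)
The plan is to reduce all four problems—MCP-$k$, MA-$k$, MC-$k$, MAP-$k$—to a constant number of minimum cost flow computations on a graph of size $O(|E|)$ built directly from $G$, and then invoke the recent almost-linear-time minimum cost flow algorithm of~\cite{chen2022maximum,van2023deterministic}. Let me think about what the standard flow reduction looks like and how the GK theorem makes all four problems accessible through flow.

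The paper says it leverages Schrijver's proofs of the GK theorems, which are themselves flow-based and work directly on $G$. So the key mechanism to replicate is: the GK theorem identifies $\alpha_k$ (resp. $\beta_k$) with the minimum $k$-norm of a chain (resp. antichain) partition, and these $k$-norms are min-cost-flow objectives on a suitable network. Let me sketch the structure.

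**Proof proposal.**

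The plan is to reduce each of the four problems to a constant number of minimum cost flow computations on a network of size $O(|E|)$ derived directly from $G$, and then apply the almost-linear-time min-cost flow solver of~\cite{chen2022maximum,van2023deterministic}, which runs in $|E|^{1+o(1)}$ time on such a network. Since each reduction uses only $O(1)$ flow calls on a graph with $O(|E|)$ arcs, the overall running time is $|E|^{1+o(1)}$, and crucially this bound is independent of $k$ because $k$ enters only as a numerical capacity or cost parameter inside a single flow instance rather than blowing up the graph.

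First I would set up the path-splitting gadget used in the flow formulations that avoid $TC(G)$. Each vertex $v$ is split into $v_{\mathrm{in}}, v_{\mathrm{out}}$ with a unit-capacity arc carrying a reward or cost that encodes ``$v$ is covered,'' and each edge of $E$ becomes a flow arc; a super-source and super-sink feed at most $k$ units of flow through the network, so that an integral flow of value at most $k$ decomposes into at most $k$ vertex-disjoint paths (chains). By the integrality of min-cost flow on integer-capacitated networks, an optimal integral flow yields an optimal collection of chains, and by assigning the covering reward to the split arcs one directly reads off a maximum-coverage solution. This handles MC-$k$ as a single min-cost flow call.

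The heart of the argument, and the step I expect to be the main obstacle, is transferring the flow formulation from chains to antichains. Unlike chains, antichains are not themselves paths in $G$, so one cannot route flow along them directly. Here I would invoke the Greene-Kleitman duality in its algorithmic form (following Schrijver~\cite[Theorem 14.8 \& 14.10]{schrijver2003combinatorial}): the size $\alpha_k$ of a maximum $k$-antichain equals the minimum $k$-norm of a chain partition, namely $\min_{\mathcal{C}} \sum_{C \in \mathcal{C}} \min(|C|, k)$, and this minimum $k$-norm is itself computable as a min-cost flow objective on the same vertex-split network, because the $\min(|C|,k)$ truncation can be modeled by giving each split arc a cost that saturates after $k$ units pass through the corresponding chain—equivalently by a convex (piecewise-linear) cost that the min-cost flow machinery accommodates. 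Dually, $\beta_k$ (maximum $k$-chain coverage, giving MA-$k$'s partition counterpart) is obtained from the minimum $k$-norm of an antichain partition by the symmetric construction. The delicate part is verifying that the truncation-to-$k$ cost structure both (i) correctly encodes the $k$-norm objective and (ii) keeps the network at $O(|E|)$ arcs with polynomially bounded integer costs and capacities, so that the solver's complexity guarantees apply and an optimal integral flow can be decomposed, in near-linear time, into the actual antichains or chains realizing the optimum via the constructive direction of the GK proof.

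Finally I would assemble the four claims: MC-$k$ and MA-$k$ follow from their direct coverage-flow formulations, while MCP-$k$ and MAP-$k$ follow by reading off the optimal partition from the dual min-cost flow solution guaranteed by GK duality—each requiring only a constant number of flow calls plus a linear-time flow decomposition. Summing the $O(1)$ calls at $|E|^{1+o(1)}$ each yields the claimed bound, and the ``almost optimal'' qualifier is justified since even reading the input takes $\Omega(|E|)$ time.
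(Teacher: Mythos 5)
Your high-level strategy coincides with the paper's: build a vertex-split flow network of size $O(|E|)$ directly from $G$, make $O(1)$ calls to the almost-linear min-cost flow/circulation solvers of Chen et al.\ and van den Brand et al., and let $k$ enter only as a numeric cost/capacity. However, two of your reductions would fail as described. First, for MC-$k$ you give each vertex a \emph{single} unit-capacity split arc, so an integral flow of value at most $k$ decomposes into vertex-\emph{disjoint} paths; but disjoint chains are subsequences of paths that in general must share vertices, so this computes the wrong quantity. For example, in the DAG with edges $(v_1,v_3),(v_2,v_3),(v_3,v_4),(v_3,v_5)$ and $k=2$, the chains $\{v_1,v_3,v_4\}$ and $\{v_2,v_5\}$ give $\beta_2=5$, while two vertex-disjoint paths cover at most $4$ vertices. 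The paper's networks avoid this with \emph{two parallel} arcs per vertex: $e^1_v$ with capacity $1$ and cost $-1$, and $e^2_v$ uncapacitated with cost $0$, so paths may reuse a vertex while its coverage reward is collected exactly once. Second, your encoding of the $k$-norm is not well-defined: the truncation $\min(|C|,k)$ is per \emph{chain}, and chains are not objects the network knows about, so no per-arc convex cost can ``saturate after $k$ units pass through the corresponding chain'' --- chains only emerge a posteriori from the decomposition. Schrijver's (and the paper's) encoding is linear instead: a circulation arc $(t,s)$ of cost $k$ charges $k$ per path, and the cost $-1$, capacity-$1$ arcs reward covered vertices, so a min-cost circulation computes $\min_{\mathcal{P}} |V\setminus V(\mathcal{P})| + k|\mathcal{P}|$, the $k$-norm over \emph{path collections} (the paper's Theorem~3). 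Converting this to a minimum $k$-norm chain \emph{partition} (MCP-$k$) requires the paper's additional argument that every chain in the decomposition has length at least $k$, so that completing with singletons preserves the $k$-norm.

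The most serious gap is MA-$k$ itself, which is the headline claim of the theorem. Establishing the min-max equality (or computing the optimal value $\alpha_k$) does not produce the $k$ antichains; your proposal defers this to ``the constructive direction of the GK proof'' without saying what that construction is, and that step is precisely the paper's main technical contribution (its Lemma~4). Concretely: after computing a min-cost circulation $f$ in the $\alpha_k$ network, one computes shortest-path distances $d$ from $s$ in the residual graph $G'_f$ (well-defined since $f$ optimal implies no negative cycles), defines level sets $U_i = \{v' \mid d(v') \ge i + d(t)\}$, and extracts $A_i = \{v \in V \mid v^{in} \in U_i \wedge v^{out} \notin U_i\}$ for $i = 1,\dots,d(s)-d(t) = k$; these sets are antichains and form a MA-$k$, computable in $O(|E|)$ time from $d$. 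Likewise MAP-$k$ and MAS-$k$ are extracted from the residual of the $\beta_k$ network --- there is no ``symmetric construction'' routing flow along antichains, since antichains cannot carry flow. Without this extraction mechanism, your argument yields at best the optimal values, not the solutions the theorem asserts we can compute.
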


In~\cite{kogan2024algorithmic} it was shown how to compute MCP-$k$~\cite[Lemma 4.4]{kogan2024algorithmic} and MC-$k$~\cite[Lemma 4.6]{kogan2024algorithmic} by a logarithmic number of minimum cost flow computations. The corresponding results in \Cref{thm:almostLinear} are simpler and stronger since only one application of minimum cost flow is required, and we additionally solve MA-$k$ in almost optimal time, previously unknown.

Although \Cref{thm:almostLinear} solves the problems in optimal time up to subpolynomial factors, it is important to understand whether faster and/or simpler solutions can be achieved by means of parameterization. As mentioned earlier, two state-of-the-art solutions for $k=1$ (MCP and MA-$1$) run in time $O(\alpha_1^2|V|+|E|)$~\cite{caceres2022sparsifying,caceres2022minimum} and $\tilde{O}(\alpha_1|E|)$~\cite{makinen2019sparse}, which is linear and near-linear for constant values of $\alpha_1$, respectively. It is natural to ask whether such results can be obtained for general $k$. We obtain the following result.

\begin{restatable}{theorem}{nearlinearalgo}\label{thm:nearLinear}
    Given a DAG $G = (V, E)$ and a positive integer $k$, we can solve the problems MCP-$k$ and MA-$k$ in parameterized near linear $\tilde{O}(\alpha_k|E|)$ time and the problems MAP-$k$ and MC-$k$ in parameterized near linear $\tilde{O}(\beta_k|E|)$ time.
\end{restatable}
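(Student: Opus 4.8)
The plan is to reduce the coverage problems to their Greene--Kleitman duals and to solve the resulting partition problems with a warm-started, parameterized min-cost flow, replacing the single black-box flow call behind \Cref{thm:almostLinear} by a sequence of cheap shortest-path augmentations whose number is controlled by the parameter. I would handle MCP-$k$ and MA-$k$ together (parameterized by $\alpha_k$) and obtain MAP-$k$ and MC-$k$ by the symmetric argument with the roles of chains and antichains exchanged (parameterized by $\beta_k$). Since the reductions used for \Cref{thm:almostLinear} let one read a MA-$k$ solution off an optimal MCP-$k$ partition (and conversely) in linear time, it suffices to compute the optimal partitions within the claimed bounds.

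First I would build a warm start with \texttt{greedy}. For MCP-$k$ each greedy step asks for a chain covering the largest number of still-uncovered vertices, which is a longest path in $G$ under $0/1$ vertex weights and hence one $O(|E|)$ dynamic program; by the $O(\log|V|)$ guarantee of \texttt{greedy} set cover against a minimum chain cover of size $\alpha_1 \le \alpha_k$, after $O(\alpha_1\log|V|)=\tilde{O}(\alpha_k)$ steps all vertices are covered. This produces, in $\tilde{O}(\alpha_k|E|)$ time, a chain partition with $\tilde{O}(\alpha_k)$ chains, which I would convert into a feasible flow in the vertex-splitting network underlying \Cref{thm:almostLinear}. For MAP-$k$ the symmetric warm start is an antichain partition with $\tilde{O}(\beta_k)$ classes, obtained by \texttt{greedy} applied to antichains or by the linear-time longest-path level decomposition of height $\beta_1\le\beta_k$.

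Next I would drive the feasible flow to optimality by successive minimum-cost augmentations. Because $G$ is acyclic, initial node potentials come from a topological-order dynamic program in $O(|E|)$, after which every augmentation is a single-source shortest-path computation on the nonnegative reduced costs, and maintaining the potentials keeps all later reduced costs nonnegative. Running each such computation with the recent near-linear shortest-path/flow routines~\cite{chen2022maximum,van2023deterministic} costs $\tilde{O}(|E|)$ and is the source of the Las Vegas randomization and the with-high-probability (and in-expectation) guarantees. Since the warm-start flow and the optimum both have value $\tilde{O}(\alpha_k)$, only $\tilde{O}(\alpha_k)$ augmentations are needed, for a total of $\tilde{O}(\alpha_k|E|)$; the MAP-$k$/MC-$k$ case is identical with $\beta_k$ in place of $\alpha_k$. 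From the final optimal flow I would then extract the MCP-$k$ partition and, via its complementary (dual) structure, the $k$ antichains of the MA-$k$ solution, again in linear time.

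The hard part will be this third step: designing the min-cost flow so that the $k$-norm objective is encoded faithfully on $G$ and, crucially, so that the number of augmentations separating the \texttt{greedy} warm start from the optimum is $\tilde{O}(\alpha_k)$ rather than growing with $k$ or $|V|$. The subtlety is that minimizing the $k$-norm is \emph{not} the same as minimizing the number of chains, so the flow value does not by itself measure progress; I expect to bound the augmentation count by the total flow through the network and to argue that both the warm start and every intermediate flow stay of magnitude $\tilde{O}(\alpha_k)$, which is precisely where the $O(\log|V|)$ greedy guarantee and the inequality $\alpha_1\le\alpha_k$ enter. Verifying that the encoding supports monotone potential maintenance throughout (so that nonnegative-cost shortest paths suffice and the fast routines apply) and that the dual structure of the optimum decomposes into exactly $k$ antichains are the remaining technical points.
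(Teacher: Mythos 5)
Your proposal has the right skeleton for the $\alpha_k$ side (a \texttt{greedy} warm start followed by flow improvement, with the iteration count controlled by the parameter), which is also the paper's plan, but the two steps you leave open are exactly where it breaks. First, the warm start: you bound the \emph{number} of greedy chains by $O(\alpha_1\log|V|)=\tilde{O}(\alpha_k)$, but the quantity controlling the number of improvement steps is the \emph{cost gap} in the $\alpha_k$ network, and if you convert a chain partition that covers all of $V$ into a circulation, every chain (including every short or singleton one) must be routed through the edge $(t,s)$ of cost $k$. The resulting cost is $-|V|+k\cdot(\#\text{chains})$, so the gap to the optimum $\alpha_k-|V|$ can be $\Omega(k\alpha_1)$: on an edgeless DAG with $m$ vertices, $\alpha_1=\alpha_k=m$, greedy outputs $m$ singleton chains, and the gap is $(k-1)m$, not $\tilde{O}(\alpha_k)$. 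This is precisely the ``hard part'' you flag but do not resolve. The paper's missing idea is \Cref{lem:generalizedGreedy}: run \texttt{greedy} for \emph{weighted} set cover with weight $\min(|C|,k)$, which selects longest paths only while some chain covers more than $k$ uncovered vertices and then stops, leaving the remaining vertices \emph{uncovered}; in the $k$-norm of a path collection (the MPS-$k$ objective $|V\setminus V(\paths)|+k|\paths|$) those vertices cost $1$ each rather than $k$, so the corresponding circulation has cost at most $\alpha_k\ln|V|-|V|$ and the gap is $\tilde{O}(\alpha_k)$.

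Second, your improvement procedure is the wrong framework for a warm start. Successive shortest-path augmentation with potentials maintains the invariant that the current flow is minimum-cost \emph{for its value}, equivalently that the residual graph has no negative cycle; a greedy warm start violates this, so no valid potentials exist, and your topological-order DP does not help because the residual graph of a nonzero circulation is not acyclic. The paper instead uses cycle canceling (\Cref{lem:cycleCancelingFast}), which is correct starting from \emph{any} feasible circulation and terminates after at most $c(f)-c^*$ cancellations, each found in $\tilde{O}(|E|)$ randomized Las Vegas time by the negative-weight shortest-path/negative-cycle algorithms~\cite{bernstein2022negative,bringmann2023negative} --- this, not~\cite{chen2022maximum,van2023deterministic} (which are almost-linear $|E|^{1+o(1)}$ flow solvers and would spoil the $\tilde{O}$ bound), is the source of randomization. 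Finally, on the $\beta_k$ side your ``antichain partition warm start'' has no meaning as a circulation: in the $\beta_k$ network feasible circulations encode path collections, and the antichains of MAP-$k$/MAS-$k$ are read off the \emph{dual}, i.e., residual shortest-path distances, via \Cref{lem:antichainsExtraction}. No warm start is needed there at all: the zero circulation is feasible with cost $0$, the optimum is $-\beta_k$, so cycle canceling alone already gives $\tilde{O}(\beta_k|E|)$.
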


Note that these algorithms are asymptotically faster than the ones in \Cref{thm:almostLinear} when $\alpha_k, \beta_k = O(\log^c{|E|})$, running in near optimal time. To achieve this, we combine the recent results for negative cost cycles~\cite{bernstein2022negative,bringmann2023negative,haeupler2025deterministic,li2025deterministic} with the simple cycle-canceling method for minimum cost flow. In fact, the minimum cost flow values in Schrijver's reductions are $\alpha_k-|V|$ and $-\beta_k$, respectively. This observation directly derives the result for $\beta_k$. To derive the result for $\alpha_k$, we devise a $\ln{|V|}$ approximation based on \texttt{greedy} for \emph{weighted} set cover (\Cref{lem:generalizedGreedy}) and prove that the cost of such solution is $\tilde{O}(\alpha_k)$ away from the optimal. As such, this can be seen as a generalization of the simple result in~\cite{makinen2019sparse}, where only the \emph{unweighted} version of \texttt{greedy} was used. Analogous to~\cite{makinen2019sparse}, we show how to implement the required \texttt{greedy} fast.

Our last results are approximation algorithms based on (unweighted) \texttt{greedy} set cover and are independent of Schrijver's reductions. In more detail, it follows from bounds on set cover~\cite{hochbaum1996approximating} that the first $k$ \texttt{greedy} chains and antichains cover at least a $(1-1/e)$ fraction of $\beta_k$ and $\alpha_k$, respectively. This observation already beats the state-of-the-art approximation ratio of $1/2$~\cite{kogan2024algorithmic}. We show how to implement \texttt{greedy} on the set of antichains efficiently.

\begin{restatable}{theorem}{approxalgo}\label{thm:approxAlgo}
    Given a DAG $G = (V, E)$ and a positive integer $k$, there exist $(1-1/e)$-approximation algorithms solving MC-$k$ in $O(k|E|)$ time and MA-$k$ in $O(\alpha_1^2|V|+ (\alpha_1+k)|E|)$ time.
\end{restatable}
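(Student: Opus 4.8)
The plan is to prove $\Cref{thm:approxAlgo}$ by reducing both MC-$k$ and MA-$k$ to weighted (or rather, unweighted) \texttt{greedy} set cover and invoking the standard analysis that greedily choosing the $k$ sets of maximum marginal coverage recovers a $(1-1/e)$ fraction of the best $k$-set coverage~\cite{hochbaum1996approximating}. Concretely, for MC-$k$ the ground set is $V$ and the family of sets consists of all chains of $G$; for MA-$k$ the ground set is again $V$ and the sets are all antichains. In both cases a \emph{set of $k$ disjoint chains (antichains)} of maximum total coverage is exactly a maximum-coverage choice of $k$ members of the respective family, except that set cover does not enforce disjointness. The first step is therefore to observe that disjointness can be imposed for free: once we have picked some chains (antichains), the marginal gain of a new chain (antichain) only depends on the as-yet-uncovered vertices, and any chain (antichain) can be restricted to the uncovered vertices while remaining a chain (antichain). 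Hence running \texttt{greedy} set cover for $k$ rounds, where each round selects a chain (antichain) covering the maximum number of currently uncovered vertices, produces $k$ pairwise-disjoint chains (antichains) covering at least $(1-1/e)\beta_k$ (resp.\ $(1-1/e)\alpha_k$) vertices.

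\medskip

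\noindent Having reduced the approximation guarantee to the classical bound, the remaining work is algorithmic: each greedy round must find a maximum-weight chain (antichain) under vertex weights that are $1$ on uncovered vertices and $0$ on already-covered ones, and we must do so within the stated time budgets. For the \textbf{chain} side this is straightforward. A maximum-weight chain with $0/1$ weights is a longest path in the weighted DAG (a chain through the uncovered vertices is a path in $TC(G)$, but maximizing the count of uncovered vertices on a path of $G$ suffices since any chain extends to a path), computable by a single dynamic-programming sweep in $O(|E|)$ time after marking covered vertices as weight $0$. Repeating this for $k$ rounds, re-marking the newly covered vertices between rounds, gives the claimed $O(k|E|)$ running time for MC-$k$.

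\medskip

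\noindent The \textbf{antichain} side is where the main obstacle lies, since a maximum-weight antichain is a maximum-weight independent set in $TC(G)$ and we cannot afford to build $TC(G)$. The plan is to compute a maximum-weight antichain via a flow/matching formulation on $G$ itself: by a Dilworth-type min-max duality, a maximum-weight antichain in a DAG corresponds to a minimum-weight chain/path cover, which reduces to a minimum-cost flow or a weighted bipartite matching problem. Exploiting that the DAG has width $\alpha_1$, one can use the sparsification / parameterized matching machinery of~\cite{caceres2022sparsifying,caceres2022minimum,makinen2019sparse} to compute such an antichain in roughly $O(\alpha_1^2|V| + \alpha_1|E|)$ time for the initial setup, after which each of the $k$ greedy rounds costs only an additional $O(|E|)$ to update weights and recompute. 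I expect the delicate points to be (i) verifying that restricting weights to uncovered vertices keeps the problem a valid maximum-weight-antichain instance on the same $G$, so that we never leave the sparse representation, and (ii) bookkeeping the per-round cost so that the $k$ rounds together contribute only $O(k|E|)$ on top of the one-time $O(\alpha_1^2|V| + \alpha_1|E|)$ preprocessing, yielding the overall $O(\alpha_1^2|V| + (\alpha_1+k)|E|)$ bound. Combining the algorithmic guarantees with the $(1-1/e)$ coverage bound from set cover completes the proof.
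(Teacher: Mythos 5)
Your overall route is the same as the paper's: both parts inherit the $(1-1/e)$ ratio from \texttt{greedy} maximum coverage~\cite{hochbaum1996approximating} after observing that a chain/antichain can always be restricted to the currently uncovered vertices, and your MC-$k$ algorithm (one $O(|E|)$ DP sweep per round for a path maximizing the number of uncovered vertices) is exactly \Cref{lem:approxAlgo1}. The gap is in the MA-$k$ part, and it sits precisely where the paper does its real work. First, the ``Dilworth-type duality'' you invoke --- that a maximum-size antichain contained in an arbitrary subset $U \subseteq V$ equals a minimum-cardinality set of paths required to cover only $U$ --- is not an off-the-shelf statement: the paper proves it (\Cref{lem:subsetMPC}) by modifying the flow network so that only vertices of $U$ carry lower bound $1$, and by showing that the antichain can be read off from the vertices reached by $t$ in the residual graph, which is what gives the $O(|E|)$ extraction per round.

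Second, and more seriously, your claim that after preprocessing ``each of the $k$ greedy rounds costs only an additional $O(|E|)$ to update weights and recompute'' is not achievable as stated, and you explicitly leave the bookkeeping as an unresolved ``delicate point'' rather than proving it. Recomputing a minimum flow (or re-running the $O(\alpha_1^2|V|+|E|)$-time MPC algorithm of~\cite{caceres2022sparsifying,caceres2022minimum}) from scratch in every round costs on the order of $\alpha_1|E|$ or more per round, i.e.\ $\Omega(k\alpha_1|E|)$ in total, which exceeds the claimed budget. The paper's mechanism (\Cref{lem:approxAlgo2}) is a warm start: the minimum flow $f_{i-1}$ for covering $U_{i-1}$ remains \emph{feasible} for the network of $U_i$, because shrinking $U$ only drops lower-bound constraints; hence $f_i$ is obtained from $f_{i-1}$ by finding $|f_{i-1}|-|f_i|$ decrementing $ts$-paths at $O(|E|)$ each, and the total number of decrementing paths over all $k$ rounds telescopes to $|f_1|-|f_k| \le \alpha_1$. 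This warm-start-plus-telescoping argument is exactly what turns the naive $O(k\alpha_1|E|)$ into $O((\alpha_1+k)|E|)$; without it, the stated running time does not follow. So your proposal has the right skeleton, but the two lemmas it defers are the substance of the paper's proof for MA-$k$.
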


Note that these running times do not depend on $\alpha_k, \beta_k$ nor have polylog factors as in \Cref{thm:nearLinear}. Moreover, for constant value of the parameters the algorithms run in optimal time.
The result for MC-$k$ is a direct consequence of the algorithm of Mäkinen et~al.~\cite{makinen2019sparse}. For MA-$k$ we show how to compute $k$ \texttt{greedy} antichains in $O((\alpha_1+k)|E|)$ time after using the algorithm of Cáceres et~al.~\cite{caceres2022sparsifying,caceres2022minimum} to compute the first such antichain. We amortize the computation of the antichains by using the duality between MCP and MA-$1$ over a subset of vertices. As a by-product, we also obtain a $\ln{|V|}$ approximation for MAP-$k$ (\Cref{lem:generalizedGreedyAntichains}).

We highlight that the algorithmic results in \Cref{thm:nearLinear,thm:approxAlgo} belong to the line of research ``FPT inside P''~\cite{giannopoulou2017polynomial,fomin2018fully,abboud2016approximation,caceres2021a, caceres2022parameterized,himmel2024fast} of finding natural parameterizations for problems already in P.
We complement our algorithmic contributions with the first (to the best of our knowledge) study of the limitations of \texttt{greedy} applied to sets of chains or sets of antichains, as previously exploited in the literature and in this paper. As previously stated, \texttt{greedy} inherits the bounds of set cover. Namely, it is a $\ln{n}$ approximation for the partition problems~\cite{chvatal1979greedy} and a $(1-1/e)$ approximation for the coverage problems~\cite{hochbaum1996approximating}. It is known that these bound are tight for general set systems~\cite{slavik1996tight,feige1998threshold}, however, it remains unknown whether these bounds are tight for sets of chains or antichains of a DAG. We show the following.

\begin{restatable}{theorem}{upperboundcoverage}\label{thm:upperBoundCoverage}
    For every $k\ge 2$, there exists a DAG, such that $k$ \texttt{greedy} chains cover at most $(1-1/e)\beta_k$ vertices. In the case of antichains, $k$ \texttt{greedy} antichains cover at most $(3/4)\alpha_k$ vertices.
\end{restatable}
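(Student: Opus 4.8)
The plan is to exhibit two explicit DAG families, one fooling \texttt{greedy} on chains and one on antichains, both built from a single \emph{core gadget} replicated to match the budget $k$. The two gadgets are order-theoretic conjugates of one another (the comparabilities of one are the incomparabilities of the other), so I will describe the chain gadget in detail and obtain the antichain gadget by this conjugation. Throughout, \texttt{greedy} breaks ties adversarially, as is standard for \texttt{greedy} lower bounds; I will also indicate how a small perturbation of block sizes makes the offending choice \emph{strictly} preferred, driving the ratio to $3/4$ even under arbitrary tie-breaking.

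The core chain gadget is a ``blown-up N-poset''. Take four blocks $A,B,C,D$, each a chain of $t$ vertices, and impose the block relations $A<B$, $C<B$, $C<D$ (every vertex of a lower block below every vertex of the upper block), leaving $A\parallel C$, $A\parallel D$, $B\parallel D$. Then the longest chains are the three two-block chains $\{A,B\}$, $\{C,B\}$, $\{C,D\}$, each of size $2t$, and the ground set has $4t$ vertices. Since the two disjoint chains $\{A,B\}$ and $\{C,D\}$ cover everything, $\beta_2=4t$. Now \texttt{greedy} may first take the \emph{crossing} chain $\{C,B\}$; this strands exactly the blocks $A$ (whose only partner $B$ is gone) and $D$ (whose only partner $C$ is gone), and since $A\parallel D$ the remaining vertices form two incomparable chains of size $t$, so any further chain covers at most $t$ new vertices. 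Hence \texttt{greedy} covers $2t+t=3t=\tfrac34\beta_2$. The key structural point, which I must verify, is that transitivity creates no ``shortcut'' chain longer than $2t$ and no way to re-cover $\{A,D\}$ with a single chain; both follow because the block relations contain no directed path through three blocks. Enlarging $B,C$ to size $t+1$ makes $\{C,B\}$ the unique longest chain and forces the bad choice, with ratio $(3t+2)/(4t+2)\to 3/4$.

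For antichains I use the conjugate gadget: four blocks $A,B,C,D$, each an \emph{antichain} of $t$ vertices, whose block \emph{compatibility} graph (pairs that may share an antichain, i.e.\ mutually incomparable blocks) is the path $A\!-\!B\!-\!C\!-\!D$; concretely I impose $A<C$, $A<D$, $B<D$ and leave the path edges incomparable. The block-compatibility graph being a triangle-free path, the maximum antichain has size $2t$ (an edge of the path), the two antichains $\{A,B\}$ and $\{C,D\}$ cover all $4t$ vertices so $\alpha_2=4t$, while \texttt{greedy} may take the middle antichain $\{B,C\}$; this strands $A$ and $D$, which are now \emph{comparable} (non-adjacent in the path), so $A\cup D$ has maximum antichain only $t$ and \texttt{greedy} finishes at $3t=\tfrac34\alpha_2$. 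The verifications mirror the chain case.

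Finally I extend to every $k\ge 2$ by taking a disjoint union of core gadgets with identical block size $t$, scheduling \texttt{greedy} to process one gadget at a time, which is possible since all longest chains (antichains) across gadgets tie. For even $k=2m$, $m$ disjoint copies use budget exactly $k$, optimal covers $4mt$, and \texttt{greedy} covers $3mt$, giving the ratio $3/4$ on the nose. The main obstacle is the odd case together with the requirement of an \emph{exact} $3/4$: appending a fully recoverable budget-$1$ component would push the ratio strictly above $3/4$, so a genuinely budget-$3$ gadget carrying its own $1/4$ loss is needed, after which any $k\ge 2$ is reached as $2a+3b$ by combining budget-$2$ and budget-$3$ gadgets. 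I will construct such a budget-$3$ gadget by the same crossing principle, and the crux throughout is to keep the construction ``poset-native'': because transitivity forbids realizing an arbitrary bad set-cover instance by chains, each gadget must be designed so that the long crossing chain (antichain) is genuinely optimal for \texttt{greedy} yet its removal provably fragments the remainder into incomparable (respectively comparable) pieces beyond the reach of the remaining budget.
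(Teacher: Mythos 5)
Your two $k=2$ gadgets are correct, and they are essentially the paper's own instances in disguise: the paper's chain instance $G_2$ is exactly your blown-up N-poset with $t=4$, and its antichain instance is your conjugate gadget (with a sparser but equivalent set of block relations). The fatal gap is your extension to general $k$ on the \emph{antichain} side. You take a disjoint union of gadgets and claim you can ``schedule \texttt{greedy} to process one gadget at a time, which is possible since all longest chains (antichains) across gadgets tie.'' For antichains this is false: vertices in different components are pairwise \emph{incomparable}, so antichains --- unlike chains --- span components, and the maximum antichain of the union is the union of the per-copy maximum antichains. Concretely, with $m\ge 2$ copies and budget $k=2m$, \texttt{greedy}'s first pick is forced to have size $2tm$ (adversarially $\bigcup_i \bigl(B^{(i)}\cup C^{(i)}\bigr)$); the leftover is $\bigcup_i \bigl(A^{(i)}\cup D^{(i)}\bigr)$, in which any antichain takes at most one of $A^{(i)},D^{(i)}$ per copy, so the second pick covers $tm$ vertices and the third pick covers the remaining $tm$. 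Thus \texttt{greedy} covers \emph{all} $4tm$ vertices within three rounds, and the ratio is $1$, not $3/4$, for every even $k\ge 4$. The missing idea is precisely the paper's fix: add every edge from each vertex of copy $i$ to each vertex of copy $i+1$, which makes distinct copies pairwise comparable and confines every antichain to a single copy; only then does the per-copy $1/4$ loss survive. (Your disjoint union is fine for the chain half, since a chain cannot cross components.)

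A second, smaller hole: every odd $k$ in your plan rests on a ``budget-$3$ gadget'' that you only promise to build ``by the same crossing principle,'' and this is not a routine adaptation --- the paper constructs explicit three-layer instances $G_3$ (for chains and for antichains) where \texttt{greedy} covers $9+6+4=19$ of $27$ vertices, and one must verify both that the crossing picks are legitimate maximum choices at each step and that the leftovers fragment as required; without this, your argument covers only even $k$. Finally, note that your perturbation making $B,C$ have $t+1$ vertices yields ratio $(3t+2)/(4t+2)>3/4$, so it proves only an asymptotic bound approaching $3/4$; the exact constant in the statement requires adversarial tie-breaking, which is also what the paper implicitly assumes (its greedy path of length $8$ in $G_2$ ties with the two straight paths).
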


We obtain this result by creating DAG instances where \texttt{greedy} covers exactly a $(1-(1-1/k)^k)$ fraction of the optimal. For chains, we give a graph for every $k > 1$. For antichains, we give two graphs for $k=2,3$. We leave the open problem, whether the antichain case can also be generalized to all $k>3$. For the partition problems we obtain the following result.

\begin{restatable}{theorem}{lowerboundpartition}\label{thm:lowerBoundPartition}
    For MCP, there exists a class of DAGs of increasing size, such that the number of chains taken by \texttt{greedy} is a $\log_4(|V|)$ factor away from the optimal $\alpha_1=2$. For MAP, the same result holds for \texttt{greedy} antichains and $\beta_1=2$.
\end{restatable}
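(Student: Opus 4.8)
The plan is to exhibit, for infinitely many sizes, a width-$2$ DAG (so that $\alpha_1=2$ and two chains cover $V$) on which the \texttt{greedy} chain cover uses $\Theta(\log|V|)$ chains, together with a dual height-$2$ construction for antichains. First I would record the matching upper bound, both to calibrate the target and to see that the construction must be essentially tight. In a width-$2$ poset the two covering chains have total size $|V|$, so the larger one — hence the longest chain — has at least $|V|/2$ vertices; moreover the subposet surviving the deletion of any chain still has width $\le 2$. Therefore every \texttt{greedy} step removes at least half of the remaining vertices, and \texttt{greedy} terminates after at most $\lceil\log_2|V|\rceil+1$ chains. The lower bound must match this: each \texttt{greedy} step should remove essentially exactly half of the surviving vertices.

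To build such an instance I would encode the DAG as a point set in the plane under the dominance order, so that a chain is an increasing subsequence, an antichain is a decreasing subsequence, transitivity is automatic, and width $2$ becomes the purely combinatorial condition that the point set contains no decreasing triple (equivalently, a $321$-avoiding permutation). The plan is to construct $P_t$ recursively so that (i) it has a strictly longest increasing subsequence $D_t$ of size $|P_t|/2$, forced to interleave both covering chains, and (ii) deleting $D_t$ leaves exactly a scaled copy of $P_{t-1}$. Then \texttt{greedy} is compelled to take $D_t$ first (strictness avoids any tie-breaking subtlety), after which it faces $P_{t-1}$ and recurses; with the base case a single incomparable pair this yields $t+1$ chains on $|P_t|=2^t$ vertices, i.e. a $\log_4|V|$ factor over the optimum $2$.

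The delicate point, which I expect to be the main obstacle, is realizing (i) and (ii) \emph{while keeping the width equal to $2$}. Width $2$ forbids placing the surviving copy of $P_{t-1}$ "far away" from $D_t$: a chain that were globally incomparable to a subposet containing an incomparable pair would already form an antichain of size $3$. Hence each vertex of $P_{t-1}$ may be incomparable to only a bounded number of vertices of $D_t$, and the incomparable pairs of $P_{t-1}$ must be spread along $D_t$ so that no vertex of $D_t$ is incomparable to both endpoints of any pair. I would therefore position $P_{t-1}$ as a thin perturbation of the diagonal staircase $D_t$, arranged so that inserting any vertex of $P_{t-1}$ into a chain forces the deletion of a matching vertex of $D_t$; this trade-off caps the length of every mixed chain at $|P_t|/2$ while leaving $D_t$ the strict maximum. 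Verifying the three invariants — no decreasing triple, $D_t$ strictly longest, and $P_t\setminus D_t\cong P_{t-1}$ — is the technical heart and is carried out by induction on $t$.

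For the MAP case I would reprove nothing. Reversing the permutation (reading positions right-to-left) carries a $321$-avoiding point set to a $123$-avoiding one, that is, a \emph{height}-$2$ DAG with $\beta_1=2$, and it is a bijection sending the increasing subsequences of $P_t$ to the decreasing subsequences of its reverse, length for length. Consequently \texttt{greedy} antichain peeling on the reversed instance proceeds step for step identically to \texttt{greedy} chain peeling on $P_t$, uses the same number of antichains, and so is a $\log_4|V|$ factor away from the optimum $2$ as well.
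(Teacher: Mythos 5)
Your overall plan is the right shape (a width-$2$ family on which \texttt{greedy} peels $\Theta(\log|V|)$ chains, plus a chain/antichain duality for the MAP half), and two ingredients are genuinely correct: the upper-bound calibration (in a width-$2$ poset the longest chain covers at least half the surviving vertices, so \texttt{greedy} stops after $O(\log|V|)$ chains) and the reversal trick (a $321$-avoiding point set reverses to a $123$-avoiding one, exchanging increasing and decreasing subsequences, so the antichain statement follows from the chain statement \emph{provided} the chain instance is realized as a planar point set). However, the core of the proof --- the recursive construction $P_t$ --- is never actually exhibited; you defer exactly the part you yourself call the technical heart. Worse, the three invariants you impose are mutually inconsistent, so the induction cannot even start. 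Take $t=2$: you need a $4$-element poset of width $2$ with a \emph{strictly} unique longest chain $D_2$ of size $2$ whose removal leaves an incomparable pair $\{c,d\}$. Uniqueness of the longest chain forces every pair other than the one inside $D_2=\{a<b\}$ to be incomparable (any other comparable pair is a second chain of size $2$), but then $\{a,c,d\}$ is an antichain of size $3$, contradicting width $2$. So ``exactly half, strictly longest, remainder is the previous instance'' is unsatisfiable, and the strictness you introduced precisely to avoid tie-breaking is what kills the construction.

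The paper escapes this tension by giving up exact halving: its instances $G^C_i$ consist of two ``straight'' covering paths plus alternating paths $P_1,\dots,P_i$, where segment $j$ of $P_i$ is a path of $\binom{i}{j}$ vertices. The binomial weights make $P_i$ the \emph{strictly} heaviest path (shown by a local rerouting/exchange argument using $\binom{i-1}{j-1}+\dots+\binom{j}{j-1}<\binom{i}{j}$), yet $P_i$ carries only roughly --- not exactly --- half of the vertices, and the residual instance satisfies $TC(G^C_i)\setminus P_i = TC(G^C_{i-1})$, which restores your intended self-similarity without violating width $2$. Also note the paper does not obtain MAP by dualizing this construction; it builds a separate height-$2$ family $G^A_i$ (two layers $X,Y$ with planted antichains $T_k$ of geometrically decreasing size). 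Your reversal argument is a nice potential simplification there, but it only pays off once a correct point-set realization of the chain instance exists; as submitted, both halves of your proof rest on a construction whose defining properties are contradictory.
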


Although these results do not match the upper bound of $\ln{n}$, they are the first to demonstrate the approximation ratio to be $\Omega(\log{n})$. Moreover, since the results hold for constant ($=2$) optimal solution, it also disproves the number of \texttt{greedy} chains or antichains to be a function of the optimal solution, that is, $\alpha_1$ or $\beta_1$, respectively. To achieve this, we construct a recursive class of graphs such that, after hiding the vertices chosen by \texttt{greedy} in a graph, the resulting graph corresponds to the previous step in the class.

The rest of the paper is organized as follows. In the next section, we present the technical notation used in the paper as well as preliminary results, including Schrijver's reductions. The following four sections are dedicated to prove \Crefrange{thm:almostLinear}{thm:lowerBoundPartition}. We prove every theorem in several parts.

\section{Preliminaries and Schrijver's reductions}\label{sec:preliminaries}

We use $G = (V, E)$ and $k$ to denote our input DAG and a positive integer, respectively. We denote paths $P$ and chains $C$ (subsequences of paths) by their sequence of vertices, but we also use $P$ and $C$ to refer to the corresponding sets of vertices. As such, their \emph{lengths} are $|P|$ and $|C|$, respectively. Antichains are sets of pairwise unreachable vertices denoted by $A$. We use $\paths,\chains$ and $\antichains$ to denote collections of paths, chains and antichains, respectively, and $V(\paths), V(\chains), V(\antichains)$ to denote the corresponding vertices in these collections, e.g. $V(\chains) = \bigcup_{C\in\chains} C$ and $|V(\chains)|$ is the \emph{coverage} of $\chains$. In the collections of chains $\chains$ and antichains $\antichains$ used in this paper, the chains/antichains are pairwise disjoint. However, pairs of paths in collections of paths $\paths$ might intersect. If $v \in A$ (or $P$ or $C$) we say that $A$ \emph{covers} $v$, if additionally $A\in \antichains$ (or $\chains$ or $\paths$) we also say that $\antichains$ covers $v$. If $V = V(\paths)$ we say that $\paths$ is a path \emph{cover}. If $V = V(\chains)$ (resp. $V(\antichains)$) we say that $\chains$ is a chain (resp. antichain) \emph{partition}. The celebrated GK theorems state:

\begin{theorem}[Greene-Kleitman (1976)]\label{thm:GK}
The following two equalities hold:
\begin{align*}
    \alpha_k := \max_{\antichains: |\antichains| = k} |V(\antichains)| &= \min_{\chains:V = V(\chains)}\sum_{C \in \chains} \min(|C|, k)\\
    \beta_k := \max_{\chains: |\chains| = k} |V(\chains)| &= \min_{\antichains:V = V(\antichains)}\sum_{A \in \antichains} \min(|A|, k)
\end{align*}
\end{theorem}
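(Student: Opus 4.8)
The plan is to prove each of the two equalities as a min--max identity, separating an easy \emph{weak duality} inequality from a harder \emph{strong duality} matching. I first dispatch weak duality combinatorially. Fix the first equality, let $\antichains=\{A_1,\dots,A_k\}$ be $k$ pairwise disjoint antichains, and let $\chains$ be any chain partition of $V$. For a chain $C\in\chains$ and an antichain $A_i$ we have $|C\cap A_i|\le 1$, since the vertices of $C$ are pairwise comparable while those of $A_i$ are pairwise incomparable. As the $A_i$ are disjoint, $|C\cap V(\antichains)|=\sum_{i=1}^k|C\cap A_i|\le\min(|C|,k)$, and summing over the partition gives $|V(\antichains)|=\sum_{C\in\chains}|C\cap V(\antichains)|\le\sum_{C\in\chains}\min(|C|,k)$. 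Maximizing over $\antichains$ and minimizing over $\chains$ yields $\alpha_k\le\min_{\chains}\sum_{C}\min(|C|,k)$; the same argument with chains and antichains exchanged proves $\beta_k\le\min_{\antichains}\sum_A\min(|A|,k)$.

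For strong duality I would realize each quantity as the optimum of a minimum-cost flow on a network built directly on $G$ (not on $TC(G)$), following Schrijver~\cite{schrijver2003combinatorial}. Concretely, I would split every $v\in V$ into an in-copy $v^-$ and an out-copy $v^+$ joined by a vertex arc $v^-\to v^+$, turn every edge $(u,v)\in E$ into an arc $u^+\to v^-$, and attach a source $s$ and sink $t$ with arcs $s\to v^-$ and $v^+\to t$. With suitable capacities and costs, integral flows correspond to collections of chains, the parameter $k$ entering either as the flow value or as a per-chain cap in the cost; the minimum-cost flow then encodes the coverage on one side of the identity and a partition on the other, and the abstract's anchor that the optimum equals $\alpha_k-|V|$ (respectively $-\beta_k$) is the target the costs must reproduce. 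The structural facts I would invoke are that the node--arc incidence matrix of this network is totally unimodular, so the flow LP attains an integral optimum, and that LP duality matches the primal flow optimum with its dual; reading the integral primal optimum as the maximum coverage and the integral dual optimum (via complementary slackness) as a minimum-$k$-norm partition closes the gap left by weak duality.

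To make the correspondence concrete for the first equality, I would use Mirsky's theorem~\cite{mirsky1971dual} applied to subposets: a vertex set $S$ splits into $k$ antichains exactly when $S$ induces no chain of more than $k$ vertices, so $\alpha_k=\max\{|S|:S\subseteq V\text{ induces height at most }k\}$. I would then choose the arc costs so that the flow optimum equals $|S|$ for an optimal such $S$, with $k$ capping the contribution collectible along any induced chain, while the dual variables on the vertex arcs assemble into the chains of a partition whose $k$-norm matches. The antichain version (second equality) generalizes Mirsky rather than Dilworth~\cite{dilworth1987decomposition} and would be handled by an analogous but separately constructed network. I emphasize that it does \emph{not} follow from the first by order reversal: reversing all arcs of $G$ preserves both chains and antichains, hence fixes $\alpha_k$, $\beta_k$ and both minimizing sides, so it cannot exchange the two equalities.

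The main obstacle I anticipate is designing the capacities and costs so that the network simultaneously (i)~enforces the antichain/incomparability constraint through purely local structure on $G$---the whole difficulty, since incomparability is a property of $TC(G)$ that we are forbidden to build explicitly---and (ii)~makes the LP dual read off literally as a chain (respectively antichain) partition together with its $k$-norm. Establishing integrality and the exact primal--dual dictionary, so that an optimal dual solution decomposes into an honest partition attaining equality, is the technical crux; once the network is correct, total unimodularity and LP duality finish the argument.
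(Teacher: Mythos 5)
Your weak-duality half is correct and complete: the argument that each chain meets each of the $k$ disjoint antichains at most once, hence $|C\cap V(\antichains)|\le\min(|C|,k)$, is exactly the easy direction (the paper proves the same inequality, phrased for path/antichain collections, at the start of \Cref{appendix:GKProof}). Your strong-duality plan is also the same route the paper takes --- Schrijver's min-cost flow reduction on the split graph built directly on $G$. The problem is that the proposal stops precisely where that proof begins. You never fix the capacities and costs, never establish the dictionary between integral flows and chain/path collections, and never show how an optimal dual produces $k$ disjoint antichains; you yourself flag this primal--dual dictionary as the technical crux and leave it open. Appealing to total unimodularity and LP duality does not close it: those facts give you \emph{some} integral primal and dual optima with equal objective values, but the entire content of the theorem is that these optima can be \emph{read off} as a maximum antichain family and a minimum-$k$-norm chain partition, and that reading is what must be constructed and verified.

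Concretely, what is missing and what the paper supplies in \Cref{appendix:GKProof}: (i) the exact network --- two parallel arcs $e^1_v,e^2_v$ per vertex with $u(e^1_v)=1$, $c(e^1_v)=-1$, and an arc $(t,s)$ with cost $k$ (for the $\alpha_k$ equality) or capacity $k$ (for the $\beta_k$ equality); (ii) the extraction of the maximization side from shortest-path distances $d$ in the residual graph $G'_f$ of a minimum-cost circulation: the level sets $U_i=\{v'\mid d(v')\ge i+d(t)\}$ admit no entering edges other than $(t,s)$, which is exactly what makes $A_i=\{v\mid v^{in}\in U_i \land v^{out}\notin U_i\}$ an antichain --- this is the resolution of your obstacle (i), i.e.\ how incomparability is enforced by purely local structure without building $TC(G)$; (iii) the counting argument, using complementary slackness on the arcs $e^1_v$ and on $(t,s)$, that yields $k|\paths_f|\le|V(\paths_f)\cap V(\antichains_f)|$ and then the closing chain of inequalities; and (iv) a bridge from path collections to chain partitions, since the circulation decomposes into \emph{paths} whose $k$-norm is $|V\setminus V(\paths)|+k|\paths|$, not into a chain partition with $k$-norm $\sum_{C}\min(|C|,k)$ as the theorem statement requires. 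The paper closes (iv) via \Cref{lem:relations} together with the observation (proof of \Cref{thm:almostLinear}, part I) that in an optimal path collection every path covers at least $k$ vertices exclusively, so the corresponding chains have length at least $k$ and the partition completion has the same $k$-norm. None of these four steps is routine bookkeeping, and your Mirsky-on-subposets reformulation, while true, does not substitute for any of them. (Your side remark that the second equality does not follow from the first by order reversal is correct.)
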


For a chain (or antichain) partition $\chains$, $\sum_{C \in \chains} \min(|C|, k)$ is known as the $k$-norm of $\chains$. In addition to MC-$k$ and MA-$k$, we use MP-$k$ to refer to a collection of $k$ paths covering $\beta_k$ vertices\footnote{Note that any MC-$k$ can be converted into a MP-$k$ of the same size and vice versa.}. A path cover of minimum size is denoted MPC. We also use the same notation to refer to the corresponding problems of finding one such optimal collections. See \Cref{tab:acronyms} for a summary of the acronyms we use. Note that there might be no path cover with $k$-norm (as defined in \Cref{thm:GK}) exactly $\alpha_k$, as paths can be forced to cover vertices multiple times (see e.g.~\cite[Figure 1]{caceres2023minimum}). Schrijver~\cite{schrijver2003combinatorial} extended the definition of $k$-norm to collections of paths and (pairwise disjoint) antichains, and proved the following version of the GK theorems. The original result is on sets (paths and antichains) of edges: in \Cref{appendix:GKProof} we show that this can be easily adapted for sets of vertices.

\begin{restatable}[Adaptation of Schrijver (2003)]{theorem}{GKCollections}\label{thm:GKCollections}
The following two equalities hold:
\begin{align*}
    \alpha_k &= \min_{\paths}|V\setminus V(\paths)| + k|\paths|\\
    \beta_k &= \min_{\antichains}|V\setminus V(\antichains)| + k|\antichains|
\end{align*}
\end{restatable}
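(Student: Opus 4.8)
The plan is to derive both identities from the classical Greene--Kleitman theorem (\Cref{thm:GK}), which I assume as given, rather than re-deriving them from the minimum-cost-flow reductions. I will prove $\alpha_k = \min_{\paths}\,|V\setminus V(\paths)| + k|\paths|$ in detail; the identity for $\beta_k$ is entirely symmetric, exchanging the roles of chains/paths and antichains and reading the corresponding half of \Cref{thm:GK}. Each identity is established by two opposite inequalities.

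For the ``$\le$'' (weak duality) direction, fix any $k$ pairwise disjoint antichains $\antichains$ with $|V(\antichains)| = \alpha_k$ and any collection of paths $\paths$. The key observation is that a path is a chain, and a chain meets an antichain in at most one vertex; since $\antichains$ consists of $k$ antichains, each path $P \in \paths$ satisfies $|P \cap V(\antichains)| \le k$. Splitting $V(\antichains)$ into the part covered by $\paths$ and the part not covered, I bound the covered part by $\sum_{P\in\paths}|P\cap V(\antichains)| \le k|\paths|$ (this over-counts vertices lying on several paths, which only helps the bound), and the uncovered part by $|V\setminus V(\paths)|$. Hence $\alpha_k = |V(\antichains)| \le k|\paths| + |V\setminus V(\paths)|$, and taking the minimum over $\paths$ gives the inequality.

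For the ``$\ge$'' direction I exhibit a single path collection meeting the bound. By \Cref{thm:GK}, let $\chains^\star$ be a chain partition of $V$ with $\sum_{C\in\chains^\star}\min(|C|,k) = \alpha_k$. I split $\chains^\star$ into \emph{long} chains ($|C| > k$) and \emph{short} chains ($|C| \le k$). For each long chain I pick a path $P_C$ in $G$ with $C \subseteq P_C$ (such a path exists since a chain is by definition a subsequence of the vertices of a path), let $\paths$ be the collection of these paths, and discard the short chains. Then $k|\paths| = \sum_{C\text{ long}} k = \sum_{C\text{ long}}\min(|C|,k)$. Moreover, since $\chains^\star$ partitions $V$ and $V(\paths)$ contains every long chain, the uncovered vertices lie within the short chains, so $|V\setminus V(\paths)| \le \sum_{C\text{ short}}|C| = \sum_{C\text{ short}}\min(|C|,k)$. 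Adding the two estimates yields $|V\setminus V(\paths)| + k|\paths| \le \sum_{C\in\chains^\star}\min(|C|,k) = \alpha_k$, completing the argument.

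The proof is short once \Cref{thm:GK} is in hand, so the main subtlety is less a single hard step than getting the accounting to close exactly at the threshold $k$: the construction must replace only the long chains by paths while short chains are left uncovered at cost exactly $\min(|C|,k)=|C|$, and the weak-duality bound must match this at $\min(|C|,k)=k$. The one genuinely asymmetric point between the two identities is that, for $\alpha_k$, the minimizing objects are \emph{paths} (chosen with the later flow reductions on $G$ in mind), so I must extend each retained chain to a path and observe this extension can only shrink $V\setminus V(\paths)$; for $\beta_k$ no such extension is needed, as the minimizing objects are antichains and the minimum-$k$-norm antichain partition supplied by \Cref{thm:GK} is used verbatim. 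Equivalently, the path-based minimum coincides with the chain-based one, since every path is a chain and every chain extends to a path without increasing the objective.
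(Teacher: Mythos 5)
Your proof is correct, but it takes a genuinely different route from the paper's own proof. The weak-duality half ($\le$) coincides with the paper's argument in \Cref{appendix:GKProof}: both split $|V(\antichains)|$ into vertices covered and not covered by $\paths$ and use $|P\cap A|\le 1$. The difference is the attainment direction: you invoke the classical Greene--Kleitman theorem (\Cref{thm:GK}, which the paper indeed states as a known prior result, so there is no circularity), take a chain partition of minimum $k$-norm, extend each long chain ($|C|>k$) to a path, discard the short ones, and check that the accounting closes at the threshold $k$; the symmetric argument for $\beta_k$ (keeping the long antichains, no extension needed) is equally sound. The paper instead proves attainment from scratch via Schrijver's reductions: it takes a minimum cost circulation $f$ in the $\alpha_k$ (resp.\ $\beta_k$) network, considers shortest-path distances $d$ in the residual graph $G'_f$, and extracts antichains $A_i$ as level sets of $d$, verifying complementary-slackness-type (in)equalities to make the primal and dual values meet. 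What the paper's longer proof buys is precisely what your shorter one does not: it is self-contained (it re-proves rather than assumes \Cref{thm:GK}), and it is constructive in the way the algorithms require --- the level-set antichains $A_i$ are exactly what \Cref{lem:antichainsExtraction} computes from a circulation in $|E|^{1+o(1)}$ or $\tilde{O}(|E|)$ time, so the proof structure is reused directly by \Cref{thm:almostLinear,thm:nearLinear}. Your derivation establishes existence of an optimal path collection but gives no efficient handle on the dual antichains; what it buys in exchange is brevity and the clarifying observation that \Cref{thm:GKCollections} is an easy corollary of \Cref{thm:GK}, with the only asymmetry being that chains must be extended to paths while antichains are used as they are.
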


\begin{table}[t]
    \newcolumntype{L}[1]{>{\raggedright\arraybackslash}p{#1}}
    \caption{Acronyms used for the problems solved in this paper on input $G = (V,E), k$.}
    \centering
    \begin{tabular}{L{1.3cm}L{11.7cm}}
         Acronym     & Definition \\\Xhline{3\arrayrulewidth}
         MA-$k$     & Maximum Coverage $k$-antichains, i.e.~a set of $k$ antichains whose union size $\alpha_k$ is maximized  \\\hline
         MC-$k$     & Maximum Coverage $k$-chains, i.e.~a set of chains whose union size $\beta_k$ is maximized  \\\hline
         MP-$k$     & Maximum Coverage $k$-paths, i.e., a set of $k$ paths whose union size $\beta_k$ is maximized  \\\Xhline{3\arrayrulewidth}
         MAP        & A partition of $V$ into the minimum number of antichains, which equals $\beta_1$, the \emph{height} of $G$, by Mirsky's theorem~\cite{mirsky1971dual}  \\\hline
         MCP        & A partition of $V$ into the minimum number of chains, which equals $\alpha_1$, the \emph{width} of $G$, by Dilworth's theorem~\cite{dilworth1987decomposition}  \\\hline
         MPC        & A set of paths covering $V$ of minimum cardinality, which is equal to $\alpha_1$ by Dilworth's theorem~\cite{ntafos1979path}  \\\Xhline{3\arrayrulewidth}
         MAP-$k$    & An antichain partition $\antichains$ of minimum $k$-norm $\sum_{A \in \antichains} \min(|A|, k)$, equaling $\beta_k$ by Greene-Kleitman theorems~\cite{greene1976structure,greene1976some}; MAP-$1$=MAP  \\\hline
         MCP-$k$    & A chain partition $\chains$ of minimum $k$-norm $\sum_{C \in \chains} \min(|C|, k)$, equaling $\alpha_k$ by Greene-Kleitman theorems~\cite{greene1976structure,greene1976some}; MCP-$1$=MCP  \\\Xhline{2\arrayrulewidth}
         MAS-$k$    & A set $\antichains$ of antichains of minimum $Sk$-norm $|V\setminus V(\antichains)| + k|\antichains|$, equaling $\beta_k$ by Greene-Kleitman theorems~\cite{schrijver2003combinatorial}; MAS-$k$ equivalent to MAP-$k$  \\\hline
         MPS-$k$    & A set $\paths$ of paths of minimum $Sk$-norm $|V\setminus V(\paths)| + k|\paths|$, equaling $\alpha_k$ by Greene-Kleitman theorems~\cite{schrijver2003combinatorial}; MPS-$1$ equivalent to MPC  \\\Xhline{3\arrayrulewidth}
    \end{tabular} 
    \label{tab:acronyms}
\end{table}

Schrijver used reductions to minimum cost circulation to prove these equalities, we describe these reductions at the end of this section. For a collection of paths (resp. antichains) $\paths$, $|V\setminus V(\paths)| + k|\paths|$ is known as the $k$-norm of $\paths$, but we will use $Sk$-norm to avoid confusion. We use MAS-$k$ (resp. MPS-$k$) to denote a collection of antichains (resp. paths) of minimum $Sk$-norm. We define the \emph{partition completion} of a collection of chains (or antichains) $\chains$ as $\chains^V := \chains \cup \{\{v\} \mid v \in V\setminus V(\chains)\}$. 

\begin{restatable}{lemma}{Relations}\label{lem:relations}
 Let $\chains, \antichains, \paths$ be collections of chains, antichains and paths of a graph $G = (V, E)$, respectively. Then, the following statements hold.
    \begin{enumerate}
        \item If $V(\paths) = V(\chains)$, $|\chains| = |\paths|$ and $|C| \ge k$ for all $C \in \chains$, then the $Sk$-norm of $\paths$ equals the $k$-norm of $\chains^V$.
        \item If $|A| \ge k$ for all $A\in\antichains$, then the $Sk$-norm of $\antichains$ equals the $k$-norm of $\antichains^V$.
        \item If $\antichains$ is a MAS-$k$, then $\antichains^V$ is a MAP-$k$.
    \end{enumerate}
\end{restatable}
\begin{proof}
    \begin{enumerate}
        \item The $k$-norm of $\chains^V$ is 
        \begin{align*}
            \sum_{C\in\chains^V} \min(|C|,k) &=  \sum_{C\in \{\{v\} \mid v \in V\setminus V(\chains)\}} \min(|C|,k) + \sum_{C\in\chains} \min(|C|,k)\\  
            &= |V\setminus V(\chains)| + k|\chains| \\
            &= |V\setminus V(\paths)| + k|\paths|
        \end{align*}
        And thus equal to the $Sk$-norm of $\paths$.
        \item The $k$-norm of $\antichains^V$ is
        \begin{align*}
            \sum_{A\in\antichains^V} \min(|A|,k) &=  \sum_{A\in \{\{v\} \mid v \in V\setminus V(\antichains)\}} \min(|A|,k) + \sum_{A\in\antichains} \min(|A|,k)\\  
            &= |V\setminus V(\antichains)| + k|\antichains|
        \end{align*}
        \item It follows by the previous point and \Cref{thm:GK,thm:GKCollections}, by noting that $|A|\ge k$ for all $A\in \antichains$ (removing short, $|A|<k$, antichains decreases the $Sk$-norm).
    \end{enumerate}
\end{proof}

Our solutions rely on the use of \emph{minimum flows} and \emph{circulations}, here we review the basics needed to understand our results, for formal definitions we refer to~\cite{ahuja1993network, williamson2019network}. We say that a digraph $G' = (V', E')$, a demand function $\ell: E' \to \ints_{\ge 0}$, a capacity function $u: E' \to \ints_{\ge 0}$, a cost function $c: E' \to \ints$ and optionally $s\in V'$, $t\in V'$, is a \emph{flow network}. When specifying a flow network, if $\ell(e')$, $u(e')$, or $c(e')$ are not given for some $e'\in E'$, we assume they are $0,\infty,0$ ($\infty$ being a big enough number), respectively. We say that a flow (or circulation) $f: E' \to \ints_{\ge 0}$ is \emph{feasible} a flow network, if $\ell(e') \le f(e') \le u(e')$ for all $e' \in E'$. We denote the value of $f$ by $|f|$: the flow networks used in this paper are always DAGs up to one edge (going from $t$ to $s$), and thus the value of a feasible flow $f$ is well-defined. Moreover, in all the flow networks considered in this paper, $f$ can be \emph{decomposed} into exactly $|f|$ paths (removing the edge $(t,s)$ in the case of circulations), that is, $f$ \emph{represents} or \emph{encodes} a path collection $\paths_f$ with $|\paths_f| = |f|$. There exist algorithms computing $\paths_f$ from $f$ in optimal $O(|E'| + \texttt{output})$ time (see e.g~\cite{kogan2022beating,caceres2024practical}). Moreover, Cáceres~\cite{caceres2023minimum} showed how to compute a chain partition $\chains_f$ such that $|\chains_f| = |f|$ and $V(\paths_f) = V(\chains_f)$ in time $O(|E|\log{|f|})$. A \emph{minimum flow} (circulation) is a feasible flow minimizing $|f|$. We denote the cost of $f$ by $c(f) = \sum_{e'\in E'}f(e')c(e')$. A \emph{minimum cost flow} (circulation) is a feasible flow minimizing $c(f)$. The problems of minimum flow and minimum cost flow (circulation) have been solved in almost linear $|E|^{1+o(1)}$ time\footnote{An extra multiplicative factor of $\log{U}\cdot\log{C}$ appears in the running time, where $U$ and $C$ are the largest (absolute value of the) capacity and cost, respectively. In the flow networks used in this work, $U$ and $C$ are polynomial in $|E|$ and thus $\log{U}\cdot\log{C} = |E|^{o(1)}$.}~\cite{chen2022maximum,van2023deterministic}.

We denote the \emph{residual graph} of $G'$ and a feasible flow (circulation) $f$ by $G'_f = (V', E_f') $. Intuitively, for every edge $(u',v') = e' \in E'$, $E_f'$ contains $(u',v')$ with cost $c(e')$ only if $f(e') < u(e')$, and $(v', u')$ with cost $-c(e')$ only if $f(e') > \ell(e')$. 

A well-known result from the theory of network flows~\cite{williamson2019network} states that a feasible flow $f$ is a minimum cost circulation if and only if there is no negative cost cycle (the cost of a cycle is the sum its edges' costs) in $G'_f$.  In fact, a negative cost cycle in $G'_f$ can be used to obtain a flow of cost at most $c(f)-1$. The algorithms in \Cref{thm:nearLinear} implement a simple cycle-canceling approach boosted by the recent developments on finding negative cost cycles~\cite{bernstein2022negative,bringmann2023negative,haeupler2025deterministic,li2025deterministic}.

\begin{lemma}[\cite{bernstein2022negative,bringmann2023negative,haeupler2025deterministic,li2025deterministic}]\label{lem:cycleCancelingFast}
    Let $c^*$ be the minimum cost of a circulation in a flow network $G', \ell, u, c$. Given a feasible flow $f$, we can compute a minimum cost flow in time $\tilde{O}((c(f)-c^*+1)|E|)$.
\end{lemma}

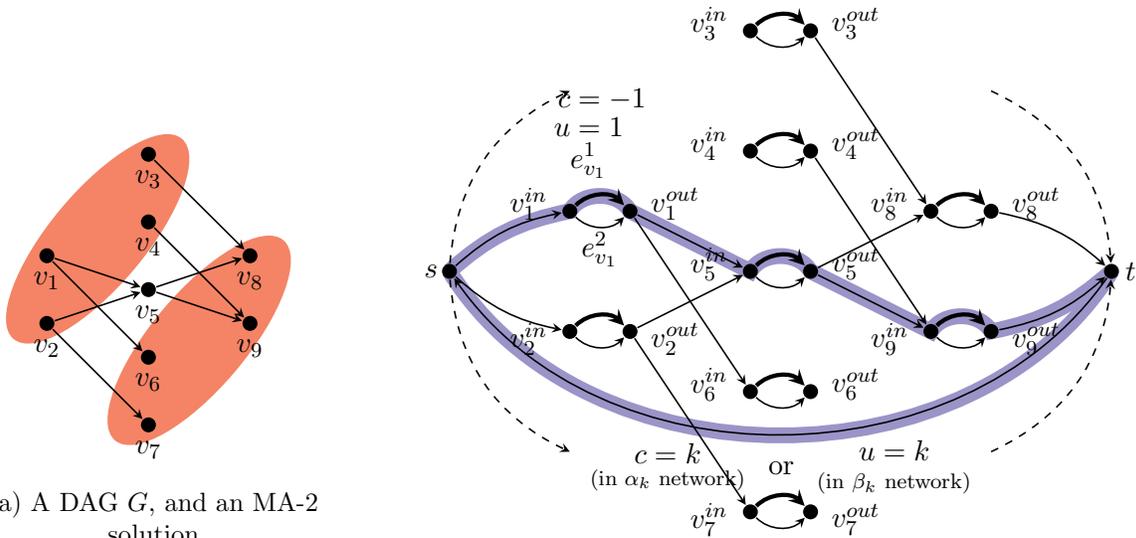
\begin{figure}[t]
\centering
    \begin{minipage}[b]{0.30\linewidth}
        \centering

        \begin{tikzpicture}[scale=0.9]  
            \draw[rotate around={-40:(0.25,1.75)}, fill=\pathtwocolor, draw=none] (0.25,1.75) ellipse (20pt and 55pt);
            \draw[rotate around={-40:(1.75,0.25)}, fill=\pathtwocolor, draw=none] (1.75,0.25) ellipse (20pt and 55pt);
            
            \node[fill=black, label={[yshift=0.1cm]below:{$v_1$}}] (A1) at (-0.5,1.5) {};
            \node[fill=black, label={[yshift=0.1cm]below:{$v_2$}}] (A2) at (-0.5,0.5) {};
            \node[fill=black, label={[yshift=0.1cm]below:{$v_3$}}] (B1) at (1,3) {};
            \node[fill=black, label={[yshift=0.1cm]below:{$v_4$}}] (B2) at (1,2) {};
            \node[fill=black, label={[yshift=0.1cm]below:{$v_5$}}] (B3) at (1,1) {};
            \node[fill=black, label={[yshift=0.1cm]below:{$v_6$}}] (B4) at (1,0) {};
            \node[fill=black, label={[yshift=0.1cm]below:{$v_7$}}] (B5) at (1,-1) {};
            \node[fill=black, label={[yshift=0.1cm]below:{$v_8$}}] (C1) at (2.5,1.5) {};
            \node[fill=black, label={[yshift=0.1cm]below:{$v_9$}}] (C2) at (2.5,0.5) {};
            \draw[->] (A1) -> (B3);
            \draw[->] (A1) -> (B4);
            \draw[->] (A2) -> (B3);
            \draw[->] (A2) -> (B5);
            \draw[->] (B1) -> (C1);
            \draw[->] (B3) -> (C1);
            \draw[->] (B2) -> (C2);
            \draw[->] (B3) -> (C2);
        \end{tikzpicture}
        \vspace{0cm} % manual hack
        \captionsetup{justification=centering}
        \subcaption{A DAG $G$, and an MA-$2$ $\{v_1,v_2,v_3,v_4\}, \{v_6,v_7,v_8,v_9\}$, in red, covering $\alpha_2 = 8$ vertices.}
        \label{subfig:G}
    \end{minipage}
    \hfill
    \begin{minipage}[b]{0.63\linewidth}
        \centering
        \begin{tikzpicture}[scale=1.6] 

            \useasboundingbox (-2.2,-2.25) rectangle (3.7,3.2);
        
            \node[fill=black, label={[yshift=0.1cm]left:{$v_1^{in}$}}] (A1in) at (-1,1.5) {};
            \node[fill=black, label={[yshift=0.1cm]right:{$v_1^{out}$}}] (A1out) at (-0.5,1.5) {};
        
            \node[fill=black, label={[yshift=-0.1cm]left:{$v_2^{in}$}}] (A2in) at (-1,0.5) {};
            \node[fill=black, label={[yshift=-0.1cm]right:{$v_2^{out}$}}] (A2out) at (-0.5,0.5) {};
        
            \node[fill=black, label={[yshift=0.1cm]left:{$v_3^{in}$}}] (B1in) at (0.5,3) {};
            \node[fill=black, label={[yshift=0.1cm]right:{$v_3^{out}$}}] (B1out) at (1,3) {};
        
            \node[fill=black, label={[yshift=0.1cm]left:{$v_4^{in}$}}] (B2in) at (0.5,2) {};
            \node[fill=black, label={[yshift=0.1cm]right:{$v_4^{out}$}}] (B2out) at (1,2) {};
        
            \node[fill=black, label={[xshift=-0.05cm]left:{$v_5^{in}$}}] (B3in) at (0.5,1) {};
            \node[fill=black, label={[yshift=0.1cm]right:{$v_5^{out}$}}] (B3out) at (1,1) {};
        
            \node[fill=black, label={[yshift=0.1cm]left:{$v_6^{in}$}}] (B4in) at (0.5,0) {};
            \node[fill=black, label={[yshift=0.1cm]right:{$v_6^{out}$}}] (B4out) at (1,0) {};    
        
            \node[fill=black, label={[yshift=-0.1cm]left:{$v_7^{in}$}}] (B5in) at (0.5,-1) {};
            \node[fill=black, label={[yshift=-0.1cm]right:{$v_7^{out}$}}] (B5out) at (1,-1) {};
            
            \node[fill=black, label={[yshift=0.1cm]left:{$v_8^{in}$}}] (C1in) at (2,1.5) {};
            \node[fill=black, label={[yshift=0.1cm]right:{$v_8^{out}$}}] (C1out) at (2.5,1.5) {};
        
            \node[fill=black, label={[yshift=-0.1cm]left:{$v_9^{in}$}}] (C2in) at (2,0.5) {};
            \node[fill=black, label={[yshift=-0.1cm]right:{$v_9^{out}$}}] (C2out) at (2.5,0.5) {};
        
            \node[fill=black, label={[xshift=0.1cm]left:{$s$}}] (s) at (-2,1) {};
            \node[fill=black, label={[xshift=-0.1cm]right:{$t$}}] (t) at (3.5,1) {};
        
            \draw[->, ultra thick] (A1in) to [bend left=45] node[midway, yshift=-0.15cm, above]{$\begin{aligned}c&=-1\\[-.2cm] u&=1\\[-.2cm]&e_{v_1}^1\end{aligned}$} (A1out); % top arc
            \draw[->] (A1in) to [bend right=45] node[midway, yshift=0.15cm, below]{$e_{v_1}^2$} (A1out); % bottom arc
            \draw[->, ultra thick] (A2in) to [bend left=45] 
            % node[midway, above, yshift=-0.4cm, xshift=0.1cm]{$\begin{aligned}c&=-1\\[-.2cm] u&=1\end{aligned}$}  
            (A2out); % top arc
            \draw[->] (A2in) to [bend right=45] (A2out); % bottom arc
            \draw[->, ultra thick] (B1in) to [bend left=45]  (B1out); % top arc
            \draw[->] (B1in) to [bend right=45] (B1out); % bottom arc
            \draw[->, ultra thick] (B2in) to [bend left=45]  (B2out); % top arc
            \draw[->] (B2in) to [bend right=45] (B2out); % bottom arc
            \draw[->, ultra thick] (B3in) to [bend left=45]  (B3out); % top arc
            \draw[->] (B3in) to [bend right=45] (B3out); % bottom arc
            \draw[->, ultra thick] (B4in) to [bend left=45]  (B4out); % top arc
            \draw[->] (B4in) to [bend right=45] (B4out); % bottom arc
            \draw[->, ultra thick] (B5in) to [bend left=45]  (B5out); % top arc
            \draw[->] (B5in) to [bend right=45] node[midway, below, yshift=1.25cm]{$\begin{array}{c}c=k\\[-0.2cm]\scriptsize{\text{(in $\alpha_k$  network)}}\end{array}$ or $\begin{array}{c}u=k\\[-0.1cm]\scriptsize{\text{(in $\beta_k$ network)}}\end{array}$} (B5out); % bottom arc
            \draw[->, ultra thick] (C1in) to [bend left=45] (C1out); % top arc
            \draw[->] (C1in) to [bend right=45] (C1out); % bottom arc
            \draw[->, ultra thick] (C2in) to [bend left=45]  (C2out); % top arc
            \draw[->] (C2in) to [bend right=45] (C2out); % bottom arc
        
            \draw[->] (A1out) -> (B3in);
            \draw[->] (A1out) -> (B4in);
            \draw[->] (A2out) -> (B3in);
            \draw[->] (A2out) -> (B5in);
            \draw[->] (B1out) -> (C1in);
            \draw[->] (B3out) -> (C1in);
            \draw[->] (B2out) -> (C2in);
            \draw[->] (B3out) -> (C2in);
        
            %\draw[->, dashed] (s) to [bend left=30] (-1,2.5);
            \draw[->] (s) to [bend left=15] (A1in);
            \draw[->] (s) to [bend right=15] (A2in);
            %\draw[->, dashed] (s) to [bend right=30] (-1,-0.5);
        
            %\draw[->, dashed] (2.5,2.5) to [bend left=30] (t);
            \draw[->] (C1out) to [bend left=15] (t);
            \draw[->] (C2out) to [bend right=15] (t);
            %\draw[->, dashed] (2.5,-0.5) to [bend right=30] (t);
            
           % \draw[->] (t) to [bend left=55] node[midway, below, yshift=2.4cm]{$\begin{array}{c}c=k\\[-0.2cm]\scriptsize{\text{(in $\alpha_k$  network)}}\end{array}$ or $\begin{array}{c}u=k\\[-0.1cm]\scriptsize{\text{(in $\beta_k$ network)}}\end{array}$} (s);

            \draw[->] (0.75,-1.75) to [bend left=45] (s);
            \draw[-] (t) to [bend left=45] (0.75,-1.75);
        
            \begin{pgfonlayer}{background}
                \draw[-,\pathonecolor, line width=6pt] (s.center) to [bend left=15] (A1in.center) to [bend left=60] (A1out.center) to (B3in.center) to [bend left=60] (B3out.center) to (C2in.center) to [bend left=60] (C2out.center) to [bend right=15] (t.center) to [bend left=45] (0.75,-1.75) to [bend left=45] (s.center);
            \end{pgfonlayer}
        
        \end{tikzpicture}
        
        \captionsetup{justification=centering}
        \subcaption{The flow network $G'$ of $G$, and an MPS-$2$ $(v_1,v_5,v_9)$, in blue, of $2$-norm equal to $6 + 2 \cdot 1 = 8 = \alpha_2$.}
        \label{subfig:Gprime}
    \end{minipage}
    \caption{A DAG $G$ and its corresponding $\alpha_k$ and $\beta_k$ networks $G'$, with their difference on edge $(t,s)$ shown. Namely, in the $\alpha_k$ network, the edge $(t,s)$ gets cost $k$, while in the $\beta_k$ network, it gets capacity $k$. For clarity, most edges of the form $(s,v^{in})$ and $(v^{out},t)$ are omitted. The edges of type $e_v^1$ are drawn thicker, and have cost $c=-1$ and capacity $u=1$.}
    \label{fig:reductions}
\end{figure}
 
We now present an adaptation of the reduction used by Schrijver in the proof of the GK theorems. For completeness, in \Cref{appendix:GKProof}, we include a complete proof of \Cref{thm:GKCollections}, adapting the proof of Schrijver to the case of sets (paths and antichains) of vertices. 

For our inputs $G = (V, E)$ and $k$, we build the graph $G' = (V', E')$ such that $V'$ contains two vertices $v^{in}$ and $v^{out}$ per vertex $v\in V$ connected by two parallel edges $e^1_v,e^2_v \in E'$ from $v^{in}$ to $v^{out}$, such that $u(e^1_v) = 1$ and $c(e^1_v) = -1$. This parallel edge construction serves the following purpose. The edge $e^1_v$ acts a ``covering edge'' that reduces the cost by one and can be used only once, and the edge $e^2_v$ allows us to use vertex $v$ in other paths but does not count towards the covering objective function. For every edge $(u,v)\in E$, $E'$ contains the edge $(u^{out}, v^{in})$. Additionally, $V'$ contains $s$ and $t$, with edges $(s, v^{in}), (v^{out}, t)\in E'$ for each $v \in V$. Finally, $E'$ contains the edge $(t, s)$: in the reduction used to prove the $\alpha_k$ part of the theorems $c(t,s)=k$ (in line with \Cref{thm:GKCollections}), in the reduction used to prove the $\beta_k$ part $u(t,s)=k$ (allowing at most $k$ paths). We call the two flow networks the \textit{$\alpha_k$ and $\beta_k$ networks}, respectively. Note that the zero circulation is a feasible circulation in both of these networks. See \Cref{fig:reductions} for an example of these networks. 
Moreover, every minimum circulation $f$ of these networks satisfies $f(e^2_v) \ge 1 \Rightarrow$ $f(e^1_v) = 1$.

In the $\beta_k$ network, this means that $c(f) = -|V(\paths_f)| = -\beta_k$, by ~\Cref{thm:GK}. Starting from the zero circulation on \Cref{lem:cycleCancelingFast}, we obtain the following Lemma.

\begin{lemma}[\Cref{thm:nearLinear}, part II]\label{lem:nearLinear2}
    Given a DAG $G = (V, E)$ and a positive integer $k$, we can solve the problems, MAP-$k$, MAS-$k$, MP-$k$ and MC-$k$ in parameterized near linear $\tilde{O}(\beta_k|E|)$ time.
\end{lemma}

A feasible circulation $f$ can be decomposed into $\paths_f$ and $\chains_f$ (to obtain MP-$k$ and MC-$k$). In \Cref{lem:antichainsExtraction} we show how to extract an MAP-$k$ and an MAS-$k$ from the minimum cost circulation. 
In contrast to the $\beta_k$ network, in the $\alpha_k$ network we have $c(f) = -|V(\paths_f)| + k|\paths_f|$ and its minimum cost is $\alpha_k-|V|$ by \Cref{thm:GKCollections}. Therefore, starting from a zero circulation, \Cref{lem:cycleCancelingFast} only produces a $\tilde{O}((\alpha_k-|V|+1)|E|)$ time algorithm. In \Cref{sec:nearLinear} we show how to efficiently compute a $\ln{|V|}$ approximation circulation for the $\alpha_k$ network. We obtain the first part of \Cref{thm:nearLinear} by starting from such a flow.

\section{MA-$k$ in Almost Linear Time}\label{sec:MA-kAlmostLinear}

The results in this section follow by connecting previous works. Nevertheless, we believe it is important to present these results, which improve over the state-of-the-art.

We first prove how to obtain \Cref{thm:almostLinear} for MPS-$k$ and MCP-$k$.

\begin{lemma}[\Cref{thm:almostLinear}, part I] \label{lem:almostLinear1}
    Given a DAG $G = (V, E)$ and a positive integer $k$, we can solve the problems MCP-$k$ and MPS-$k$ in almost optimal $|E|^{1+o(1)}+ O(\texttt{output})$ time.
\end{lemma}
\begin{proof}
    We build the $\alpha_k$ network (see \Cref{sec:preliminaries}) and compute a minimum cost circulation $f$ in $E^{1+o(1)}$ time~\cite{chen2022maximum,van2023deterministic}. Circulation $f$ can be decomposed into a collection of $|f|$ paths $\paths_f$ (for solving MPS-$k$) or $|f|$ disjoint chains $\chains_f$ such that $V(\paths_f) = V(\chains_f)$ (for solving MCP-$k$). By construction (see discussion above \Cref{lem:nearLinear2}), the cost of $f$ is $c(f) = -|V(\paths_f)| + k|\paths_f|$, since $f$ minimizes this cost it also minimizes $|V\setminus V(\paths)| + k|\paths|$, and thus $\paths_f$ is a MPS-$k$. We decompose $f$ into $\paths_f$ in $O(|E| + \texttt{output})$ time~\cite{kogan2022beating,caceres2024practical}. We can get $\chains_f$ from $f$ in $O(|E|\log{|f|}) = O(|E|\log{|V|}) = |E|^{1+o(1)}$ time~\cite{caceres2023minimum}. In~\cite[Corollary 6]{caceres2023minimum}, it is additionally stated that $\chains_f$ can be obtained from $\paths_f$ by removing the repeated vertices (up to one copy) from the paths of $\paths_f$. In other words, there is a one-to-one correspondence between $\chains_f$ and $\paths_f$ such that $C \in \chains_f$ is a subsequence of its corresponding $P \in \paths_f$. Next, we note that for every $P\in \paths_f$, $c_P := |P\setminus V(\paths_f \setminus \{P\})| \ge k$; indeed, if $c_P < k$ then the $Sk$-norm of $\paths_f \setminus \{P\}$ is $k-c_P$ units smaller than the $Sk$-norm of $\paths_f$, a contradiction. Moreover, since also $V(\chains_f) = V(\paths_f)$, the corresponding $C \in \chains_f$ to $P$ must cover at least those $c_P$ vertices, and thus for every $C \in \chains_f, |C| \ge c_P \ge k$. As such, the $k$-norm of $\chains_f^V$ is exactly the $Sk$-norm of $\paths_f$, by \Cref{lem:relations}, and thus $\chains_f^V$ is a MCP-$k$.
\end{proof}

Now we prove how to obtain \Cref{thm:almostLinear} for MP-$k$ and MC-$k$.

\begin{restatable}[\Cref{thm:almostLinear}, part II]{lemma}{AlmostLinearTwo}\label{lem:almostLinear2}
Given a DAG $G = (V, E)$ and a positive integer $k$, we can solve the problems MP-$k$, MC-$k$ in almost optimal $|E|^{1+o(1)}+ O(\texttt{output})$ time.
\end{restatable}
\begin{proof}
     We build the $\beta_k$ network (see \Cref{sec:preliminaries}) and compute a minimum cost circulation $f$ in $E^{1+o(1)}$ time~\cite{chen2022maximum,van2023deterministic}. Circulation $f$ can be decomposed into a collection of $|f|$ paths $\paths_f$ (for solving MP-$k$) or $|f|$ disjoint chains $\chains_f$ such that $V(\paths_f) = V(\chains_f)$ (for solving MC-$k$). By construction (see discussion in \Cref{sec:preliminaries}), the cost of $f$ is $c(f) = -|V(\paths_f)|$, since $f$ minimizes this cost, $\paths_f$ is a MP-$k$ and $\chains_f$ is a MC-$k$. We decompose $f$ into $\paths_f$ in $O(|E|+\texttt{output})$ time~\cite{kogan2022beating,caceres2024practical} and into $\chains_f$ in $O(|E|\log{|f|}) = O(|E|\log{|V|}) = E^{1+o(1)}$ time~\cite{caceres2023minimum}.
\end{proof}

%See \Cref{appendix:omitted} for a proof.
The proof of \Cref{thm:GKCollections} gives an explicit description of the corresponding antichain collections derived from a minimum cost circulation of the $\alpha_k$ and $\beta_k$ networks. Next, we show how to compute (from the circulation) such antichain collections efficiently.
We use this lemma to prove the final part of \Cref{thm:almostLinear}.

\begin{lemma}\label{lem:antichainsExtraction}
    Given a minimum cost circulation $f$ of the $\alpha_k$ network (resp. $\beta_k$ network). We can compute a MA-$k$ (resp. a MAS-$k$ and MAP-$k$) in time $\tilde{O}(|E|)$.
\end{lemma}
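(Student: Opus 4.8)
The plan is to follow the explicit description of the antichains furnished by the proof of \Cref{thm:GKCollections}: the optimal antichains are read off from shortest-path distances (equivalently, optimal node potentials) in the residual graph $G'_f$, so the whole extraction reduces to one negative-weight single-source shortest path computation followed by a linear-time bucketing. The two time bounds will then come directly from the two regimes (deterministic vs.\ randomized) of negative-weight shortest paths.

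First I would compute node potentials $\pi : V' \to \ints$ certifying optimality of $f$. Since $f$ is a minimum cost circulation, $G'_f$ contains no negative-cost cycle, so shortest-path distances from an auxiliary source $r$ (joined to every vertex by a zero-cost arc) are well defined and finite; setting $\pi$ to these distances makes every residual reduced cost $c_\pi(u',w') = c(u',w') + \pi(u') - \pi(w')$ nonnegative on $E'_f$. This is precisely single-source shortest paths with (possibly) negative arc weights but no negative cycle, which is solvable in $|E|^{1+o(1)}$ deterministic time via \cite{van2023deterministic} (as a special case of min cost flow), and in $\tilde{O}(|E|)$ time, Las Vegas, w.h.p.\ and in expectation, via the recent negative-weight SSSP machinery \cite{bernstein2022negative,bringmann2023negative}; this matches exactly the two claimed running times.

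Next I would convert the potentials into antichains by bucketing. Following the construction in the proof of \Cref{thm:GKCollections}, each vertex $v \in V$ is assigned a level determined by $\pi(v^{in})$ and $\pi(v^{out})$. The nonnegativity of the reduced cost on each reachability arc $(u^{out},v^{in})$ (arising from an edge $(u,v) \in E$) forces these levels to be monotone along the edges of $G$, so the set of vertices sharing a given level is pairwise unreachable, i.e.\ an antichain. The range of admissible levels is controlled by the special arc $(t,s)$ — the cost-$k$ arc in the $\alpha_k$ network (resp.\ the capacity-$k$ arc in the $\beta_k$ network) — which bounds how many distinct nonempty levels can appear, and grouping vertices by level takes $O(|V|+|E|)$ time. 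For the $\alpha_k$ network this yields the $k$ disjoint antichains of a MA-$k$; for the $\beta_k$ network it yields a collection of antichains of minimum $k$-norm, i.e.\ a MAS-$k$, and \Cref{lem:relations}(3) then turns it into a MAP-$k$ by partition completion in linear time.

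Finally I would verify optimality by complementary slackness: the $k$-norm (resp.\ coverage) of the extracted collection equals the dual objective read off from $\pi$, which by optimality of $f$ equals $c(f) = \alpha_k - |V|$ (resp.\ $-\beta_k$), certifying that the collection attains the optimum promised by \Cref{thm:GK,thm:GKCollections}. The main obstacle I anticipate lies in this second step — pinning down the exact level assignment from the pair $(\pi(v^{in}),\pi(v^{out}))$ and proving that the level sets are genuine antichains whose total size is exactly optimal — since that is where the combinatorics of Schrijver's proof must be made algorithmic; the running-time guarantees, by contrast, follow routinely once the extraction is phrased as a single negative-weight SSSP computation plus linear-time bucketing.
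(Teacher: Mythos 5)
Your proposal follows essentially the same route as the paper's proof: compute shortest-path distances (valid potentials) on the residual graph $G'_f$ --- via \cite{bernstein2022negative,bringmann2023negative} for the randomized $\tilde{O}(|E|)$ bound, or via the well-known reduction of negative-weight SSSP to minimum cost flow for the deterministic $|E|^{1+o(1)}$ bound --- then read off the antichains as the level sets $A_i = \{v \in V \mid d(v^{in}) \ge i + d(t) \land d(v^{out}) < i + d(t)\}$ exactly as in the proof of \Cref{thm:GKCollections}, and finish with \Cref{lem:relations} for MAP-$k$. The differences are cosmetic: the paper takes distances from $s$ itself rather than from an auxiliary source (either choice satisfies the triangle-inequality properties that the construction actually uses), and it settles the level-assignment issue you flag as the main obstacle by placing each vertex $v$ with $d(v^{in}) > d(v^{out})$ into the single antichain $A_{d(v^{in})-d(t)}$, which keeps the antichains pairwise disjoint and makes the bucketing run in $O(|E|)$ time.
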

\begin{proof}
    Since there are no negative cost cycles in the residual $G'_f$, the function $d: V' \to \ints$ such that $d(v')$ is the distance of a shortest (minimum cost) path from $s$ to $v'$ in $G'_f$, is well defined. In fact, we can compute $d$ in $\tilde{O}(|E|)$ by using~\cite{bernstein2022negative,bringmann2023negative,haeupler2025deterministic,li2025deterministic}. %Alternatively, we can compute $d$ in $E^{1+o(1)}$ time via a well-known reduction to minimum cost flow: use the costs of $G'_f$ as the cost of the reduction and add a global sink $t'$ connecting every $v'\in V'$ to $t'$ with an edge with $\ell(v', t) = u(v', t) = 1$, the resulting minimum flow can be decomposed into $|V'|$ shortest paths from $s$. Instead of decomposing the paths, we can obtain $d$ in $O(|E|)$ time with a simple DFS from $s$ following non-$0$ flow edges (see e.g.~\cite{ahuja1993network}).
    The proof of \Cref{thm:GKCollections} defines $U_i :=  \{v' \in V' \mid d(v') \ge i + d(t)\}$ and $A_i = \{v\in V\mid v^{in} \in U'_i \land v^{out} \not\in U'_i\}$ for every $i\in\{1,\ldots d(s)-d(t)\}$. The proof moreover shows that $\antichains_f = \{A_1, \ldots, A_{d(s)-d(t)}\}$ is a collection of antichains and that it is a MA-$k$ in the $\alpha_k$ network and a MAS-$k$ in the $\beta_k$ network. Note that $A_i$ corresponds to the set of vertices $v\in V$ such that $d(v^{in}) \ge i + d(t)$ and $d(v^{out}) < i + d(t)$. In particular, if $d(v^{in}) > d(v^{out})$, then $v \in A_j$ for $j \in \{d(v^{out}) - d(t) + 1, \ldots, d(v^{in})-d(t)\}$. As such, we can build $\antichains_f$ in $O(|E|)$ time by checking each edge $(v^{in}, v^{out})$ and placing $v$ in one such corresponding antichain (e.g. in $A_{d(v^{in})-d(t)}$) when $d(v^{in}) > d(v^{out})$. By \Cref{lem:relations}, $\antichains^V$ is a MAP-$k$.
\end{proof}

\begin{restatable}[\Cref{thm:almostLinear}, part III]{lemma}{AlmostLinearThree}\label{lem:almostLinear3}
Given a DAG $G = (V, E)$ and a positive integer $k$, we can solve the problems MA-$k$, MAP-$k$ and MAS-$k$ in almost optimal $|E|^{1+o(1)}$ time.
\end{restatable}
\begin{proof}
    Compute a minimum cost circulation $f$ in the $\alpha_k$ network (resp. $\beta_k$ network) in $|E|^{1+o(1)}$ time~\cite{chen2022maximum,van2023deterministic}, and use \Cref{lem:antichainsExtraction} to obtain a MA-$k$ (resp. MAS-$k$ and MAP-$k$).
\end{proof}

\begin{proof}[Proof of \Cref{thm:almostLinear}]
    Combine \Cref{lem:almostLinear1,lem:almostLinear2,lem:almostLinear3}.
\end{proof}

\section{MA-$k$ in Parameterized Near-Linear Time}\label{sec:nearLinear}

We exploit the well-known \emph{greedy} algorithm for \emph{weighted set cover}~\cite{chvatal1979greedy}. In \emph{minimum weight set cover} the input is a collection of sets $\sets$ such that $\bigcup_{S\in\sets} S = V$,\footnote{The reuse of $V$ in the notation is intentional.} and an associated weight function $w: \sets \to \ints_{\ge 0}$, and the task is to find a subcollection $\sets' \subseteq \sets$ covering all elements, that is $V(\sets') = V$, and minimizing its total weight, where $w(\sets') = \sum_{S\in\sets'} w(S)$. The \texttt{greedy} algorithm maintains the set of uncovered elements $U$ and iteratively picks a set $S$ minimizing the ratio $w(S)/|U\cap S|$ until $U = \emptyset$. It is known that \texttt{greedy} is a $\ln{|V|}$ approximation~\cite{chvatal1979greedy}. We show that \texttt{greedy} is a $\ln{|V|}$ approximation for MCP-$k$ and MPS-$k$, and that it can be implemented efficiently, a generalization of the result of Mäkinen et al.~\cite[Lemma 2.1]{makinen2019sparse}.

\begin{lemma}\label{lem:generalizedGreedy}
    We can compute a $\ln{|V|}$ approximation of MCP-$k$ and MPS-$k$ in time $\tilde{O}(\alpha_k|E|/k)$.
\end{lemma}
\begin{proof}
    We define the following weighted set cover instance: the sets are the chains of $G$, and for a chain $C$ its weight is $\min{(|C|, k)}$. Note that a feasible solution for this weighted set cover instance is a chain cover (not necessarily partition), and its weight corresponds to its $k$-norm. As such, MCP-$k$'s are optimal solutions to this problem. Moreover, \texttt{greedy} is a $\ln{|V|}$ approximation~\cite{chvatal1979greedy}. We next show how to compute such a \texttt{greedy} solution efficiently. For this, recall that \texttt{greedy} at each step, chooses the chain $C$ minimizing the ratio $w(C)/|U \cap C| = \min{(|C|, k)}/|U \cap C|$, where $U$ is the set of \emph{uncovered} vertices. First, note that there is always a chain $C \subseteq U$ minimizing this ratio, and thus we assume $C \subseteq U$, and the task becomes to find a chain $C \subseteq U$ minimizing the ratio $\min{(|C|, k)}/|C|$. Note that this also ensures that the chain collection found by \texttt{greedy} is a chain partition. Second, if there is no chain $C \subseteq U$ with $|C| > k$, then the ratio is always one and any such chain minimizes the ratio, when \texttt{greedy} reaches this point it can just output every vertex in $U$ in a separated singleton path. Otherwise, \texttt{greedy} must find a chain $C\subseteq U$ minimizing $k/|C|$, but since $k$ is a constant for this minimization, it is equivalent to finding a chain $C \subseteq U$ maximizing $|C|$. Such a chain can be obtained by finding a path covering the most vertices from $U$. Mäkinen et~al.~\cite[Lemma 2.1]{makinen2019sparse} show how to compute such a path in $O(|E|)$ time. Since we stop \texttt{greedy} as soon as there is no chain $C \subseteq U$ with $|C| > k$, the running time of our approach is the number of chains $|C| > k$ times $O(|E|)$. Finally, since each chain $|C| > k$ contributes $k$ to the $k$-norm $O(\alpha_k \log{|V|})$ of the output cover, the number of such chains is $O(\alpha_k \log{|V|}/k)$. Note that the $Sk$-norm of the set of paths and the $k$-norm of the chain partition output by \texttt{greedy} is the same, by \Cref{lem:relations}.
\end{proof}

%This result suffices to show the first part of \Cref{thm:nearLinear}.

\begin{restatable}[\Cref{thm:nearLinear}, part I]{lemma}{NearLinearOne}\label{lem:nearLinear1}
Given a DAG $G = (V, E)$ and a positive integer $k$, we can solve the problems, MCP-$k$, MPS-$k$ and MA-$k$ in parameterized near linear $\tilde{O}(\alpha_k|E|)$ time.
\end{restatable}
\begin{proof}
 We use \Cref{lem:generalizedGreedy} to compute a collection of paths $\paths$ whose $Sk$-norm is at most $\alpha_k \ln{|V|}$ in $\tilde{O}(\alpha_k|E|)$ time. These paths correspond to a circulation $f$ in the $\alpha_k$ network. The cost $f$ of this circulation is then $c(f) = -|V(\paths)| + k|\paths| \le \alpha_k\ln{|V|} - |V|$, and since the minimum cost of a circulation is $\alpha_k - |V|$ we can use \Cref{lem:cycleCancelingFast} to obtain a minimum cost circulation in $\tilde{O}(\alpha_k |E|)$ time. As in the proof of \Cref{lem:almostLinear2}, we can obtain a MCP-$k$ and a MPS-$k$ from the minimum cost flow circulation (\texttt{output} size for MPS-$k$ is $O(\alpha_k|E|)$). We use \Cref{lem:antichainsExtraction} to obtain a MA-$k$.
\end{proof}

\begin{proof}[Proof of \Cref{thm:nearLinear}]
    Combine \Cref{lem:nearLinear1,lem:nearLinear2}.
\end{proof}

\section{Approximate MA-$k$ in Parameterized Linear Time}\label{sec:approxAlgo}

In the \emph{maximum coverage $k$-sets} version of set cover, the sought solution consists of $k$ sets, covering the maximum number of elements, and in this case \texttt{greedy} is stopped after $k$ iterations. It is well-known that this version of \texttt{greedy} is a $(1-1/e)$ approximation~\cite{hochbaum1996approximating}. When applied to the collection of chains/paths or antichains of a DAG, we automatically obtain the approximation ratio claimed in \Cref{thm:approxAlgo}. Therefore, the main purpose of this section is to provide a fast implementation of \texttt{greedy} in these contexts. 

In a series of works~\cite{felsner2003recognition,kowaluk2008path,makinen2019sparse} it was shown how to efficiently compute \texttt{greedy} chains/paths in $O(|E|)$ time per chain/path, the first part of \Cref{thm:approxAlgo} follows.

\begin{lemma}[\Cref{thm:approxAlgo}, part I, ~\cite{makinen2019sparse}]\label{lem:approxAlgo1}
    Given a DAG $G = (V, E)$ and a positive integer $k$, there exist $(1-1/e)$-approximation algorithms solving MP-$k$ and MC-$k$ in $O(k|E|)$ time.
\end{lemma}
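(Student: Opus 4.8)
The plan is to establish the $(1-1/e)$ approximation guarantee as a direct consequence of the classical analysis of \texttt{greedy} for maximum coverage $k$-sets~\cite{hochbaum1996approximating}, and then to supply the running time by invoking the efficient per-chain/path computation from the literature. The key conceptual observation, which mirrors the setup in \Cref{lem:generalizedGreedy}, is that MP-$k$ and MC-$k$ are instances of \emph{maximum coverage $k$-sets}: take the ground set to be $V$, and take the admissible sets to be all chains (equivalently, all paths, via the correspondence between a path and the chain obtained from its vertex set) of $G$. A collection of $k$ such sets covering the maximum number of vertices is exactly a solution that covers $\beta_k$ vertices, so by \Cref{thm:GK} the optimum of this coverage instance equals $\beta_k$. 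Since \texttt{greedy} run for $k$ iterations on any maximum coverage $k$-sets instance covers at least a $(1-1/e)$ fraction of the optimum, the $k$ chains (or paths) produced cover at least $(1-1/e)\beta_k$ vertices, giving the claimed ratio.

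Next I would address the running time by arguing that each of the $k$ \texttt{greedy} iterations can be carried out in $O(|E|)$ time. At each iteration \texttt{greedy} selects the set maximizing the number of newly covered (currently uncovered) vertices; since all admissible sets have unit weight in the unweighted coverage setting, this amounts to finding, among all paths of $G$, one that covers the largest number of vertices in the uncovered set $U$. This is precisely the subroutine of Mäkinen et~al.~\cite[Lemma 2.1]{makinen2019sparse} already used in \Cref{lem:generalizedGreedy}: a single linear-time dynamic program over the DAG computes a path maximizing $|U \cap P|$ in $O(|E|)$ time. After selecting such a path, we remove its vertices from $U$ and repeat. Because \texttt{greedy} is stopped after exactly $k$ iterations (unlike in the partition setting, where it runs until $U = \emptyset$), the total running time is $O(k|E|)$, as claimed. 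To output a chain rather than a path, we keep only the newly covered vertices of the chosen path, which is a subsequence of that path and therefore a chain; this also makes the $k$ chains pairwise disjoint without affecting coverage.

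The main obstacle, such as it is, is purely a matter of correctly matching the abstract \texttt{greedy} guarantee to the DAG coverage formulation rather than any deep technical difficulty. Specifically, one must verify that the greedy selection step on chains/paths genuinely realizes the greedy step of maximum coverage $k$-sets: the subtlety is that the admissible sets (chains) number exponentially many, so we cannot enumerate them, and the correctness of the $O(|E|)$ subroutine hinges on the fact that maximizing coverage of $U$ over \emph{all} chains is equivalent to maximizing it over all \emph{paths}, which holds because any chain extends to a path covering a superset of its vertices and any path contains a chain on its vertex set. Once this equivalence is noted, both the approximation ratio (inherited from~\cite{hochbaum1996approximating} and the optimum $\beta_k$ from \Cref{thm:GK}) and the $O(k|E|)$ bound follow immediately, and indeed the whole statement for MP-$k$ and MC-$k$ is, as remarked in the paper, essentially a direct consequence of the algorithm of Mäkinen et~al.~\cite{makinen2019sparse}.
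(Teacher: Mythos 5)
Your proposal is correct and follows essentially the same route as the paper: the $(1-1/e)$ ratio is inherited from the classical \texttt{greedy} bound for maximum coverage $k$-sets~\cite{hochbaum1996approximating} applied to the chains/paths of $G$, and the $O(k|E|)$ time comes from running the $O(|E|)$-per-iteration path-selection subroutine of Mäkinen et~al.~\cite{makinen2019sparse} for exactly $k$ rounds. The paper states this almost entirely by citation, so your write-up simply fills in the details it leaves implicit (the chain/path maximization equivalence and the disjointness of the output chains), which are correct.
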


We show how to efficiently compute a maximum-sized antichain $A\subseteq U$ for some subset $U\subseteq V$. We adapt a well-known reduction from MPC to minimum flow~\cite{ntafos1979path} to work with minimum path covers only required to cover $U$ instead of the whole $V$.

\begin{lemma}\label{lem:subsetMPC}
    There exists a reduction to minimum flow for the problem of finding a minimum cardinality set of paths covering $U\subseteq V$. Moreover, we can recover a maximum-sized antichain $A\subseteq U$ from a minimum flow in this network in $O(|E|)$ time.
\end{lemma}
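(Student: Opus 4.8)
The plan is to adapt the classical split-graph reduction from minimum path cover to minimum flow~\cite{ntafos1979path}, modifying only the demands so that coverage is required for $U$ alone. Concretely, I would build $G'$ by splitting each $v\in V$ into $v^{in},v^{out}$ joined by an edge $e_v=(v^{in},v^{out})$, and set $\ell(e_v)=1$ for $v\in U$ and $\ell(e_v)=0$ for $v\in V\setminus U$, with $u(e_v)=\infty$; add $(u^{out},v^{in})$ for each $(u,v)\in E$, and $(s,v^{in})$, $(v^{out},t)$ for every $v$, all with $\ell=0,u=\infty$. A feasible integral flow $f$ decomposes into $|f|$ many $s$--$t$ paths, each projecting onto a path of $G$; the demands $\ell(e_v)=1$ for $v\in U$ force every vertex of $U$ onto at least one such path, while vertices outside $U$ are optional. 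Conversely, a family of paths covering $U$ yields a feasible flow of that value. Hence a \emph{minimum} flow encodes a minimum-cardinality collection of paths covering $U$, and such a flow is computable by the minimum-flow machinery of~\cite{chen2022maximum,van2023deterministic}.

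Next I would establish the antichain duality. Restricting each path in a cover of $U$ to its $U$-vertices yields chains of the poset $(U,\le_G)$, where $u\le_G w$ iff $u$ reaches $w$ in $G$; for a minimum-cardinality cover these chains may be taken pairwise disjoint, so the minimum number of paths covering $U$ equals the minimum number of chains in a chain partition of $U$. Applying Dilworth's theorem to $(U,\le_G)$, this in turn equals the size of a maximum antichain contained in $U$. Thus the minimum flow value in the network above is exactly the size of a maximum antichain $A\subseteq U$, which is what we wish to recover.

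For the recovery, I would read off the minimum cut $(X,\bar X)$ with $s\in X$, $t\in\bar X$ associated to a minimum flow $f$ by a single linear-time reachability search on the residual graph $G'_f$ (which has $O(|E|)$ edges), exactly as in the $U=V$ case of~\cite{makinen2019sparse,caceres2022minimum}. Since every edge of $G'$ has infinite capacity, a minimum cut can cut no edge in the $\bar X\to X$ direction, i.e.\ \emph{no edge enters $X$ from $\bar X$}; the only demand-carrying edges it may cut are the $e_v$ with $v^{in}\in X$ and $v^{out}\in\bar X$, so I set $A:=\{v\in U : v^{in}\in X,\ v^{out}\in\bar X\}$. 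By construction $A\subseteq U$, and $|A|$ equals the cut value $=|f|=$ the maximum antichain size. Finally $A$ is an antichain: if $u,w\in A$ with $u$ reaching $w$ in $G$, then the corresponding $u^{out}$--$w^{in}$ path in $G'$ starts at $u^{out}\in\bar X$ and ends at $w^{in}\in X$, hence traverses some edge from $\bar X$ into $X$, contradicting that no edge enters $X$ from $\bar X$. The search is linear, giving the claimed $O(|E|)$ extraction time.

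The step I expect to be the main obstacle is the verification of the extraction, not the reduction itself: one must argue that the residual-reachability set is genuinely a \emph{minimum} cut and that, because only the $U$-edges carry demand $1$, its cut value counts exactly $|A|$ while the infinite capacities force the predecessor-closedness that makes $A$ an antichain. This rests on minimum-flow/minimum-cut duality (LP duality) rather than plain max-flow min-cut, and on the bookkeeping ensuring simultaneously $A\subseteq U$, $A$ an antichain, and $|A|$ maximum; the remaining claims (flow decomposition, Dilworth, linear-time search on the sparse residual) are routine.
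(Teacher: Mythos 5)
Your proposal is correct and takes essentially the same route as the paper's proof: the identical split network with demands $\ell(v^{in},v^{out})=1$ exactly on $U$, the same correspondence between feasible flows and path collections covering $U$, and the same residual-reachability cut extraction (your source-side set $X$ is the complement of the paper's sink-side set $V_t$ of vertices reached by $t$, and your antichain and counting arguments coincide with the paper's). The only differences are cosmetic: your appeal to Dilworth's theorem on the induced poset $(U,\le_G)$ is redundant, since once $|A|=|f|$ and $A\subseteq U$ is an antichain, weak duality (a path meets an antichain in at most one vertex) already gives maximality, which is how the paper concludes; also, the optimality certificate of a \emph{minimum} flow is a \emph{maximum} cut in the min-flow duality, not a ``minimum cut'' as you phrase it.
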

\begin{proof}
    For $U = V$, the well-known reduction to minimum flow corresponds to $G'$ as in Schrijver's reduction but without the edge $(t,s)$ and only one copy of edges $(v^{in}, v^{out})$ for $v \in V$ with $\ell(v^{in}, v^{out}) = 1$, all other (non-specified) values for $\ell$ and $u$ are set to default (see \Cref{sec:preliminaries}). Cáceres et~al.~\cite{caceres2022sparsifying,caceres2022minimum} show that a minimum flow $f$ in this network corresponds to a MPC of $G$ (by decomposing $f$) and a maximum-sized antichain can be obtained in $O(|E|)$ time by traversing the residual $G'_f$ from $t$: the maximum-sized antichain corresponds to the vertices $v \in V$ such that $v^{out}$ is reached by $t$ in $G'_f$ but $v^{in}$ is not. Here, we prove that we obtain the same result when only required to cover $U \subseteq V$ by setting $\ell(v^{in}, v^{out}) = 1$ only for $v \in U$. First, note that a flow in this modified network can be decomposed into a collection of paths covering $U$ and a collection of paths covering $U$ corresponds to a feasible flow in this modified network. As such, a minimum flow $f$ in this modified network can be decomposed into a minimum cardinality set of paths covering $U\subseteq V$. Moreover, consider the vertices $V_t$ reached by $t$ in the residual $G'_f$ of this modified network. We define $A \subseteq U$ as $A = \{v \in U \mid v^{out}\in V_t \land v^{in} \not\in V_t\}$. First note that, if (by contradiction) there is a vertex $u \in A$ reaching a vertex $v\in A$ in $G$, then $u^{out}$ reaches $v^{in}$ in $G'$ and also in $G'_f$ (edges have default $\infty$ capacity), a contradiction since $v^{in}\not\in V_t$. As such, $A$ is an antichain. Moreover, by definition, $V_t$ is a $ts$ cut and all edges entering $V_t$ have flow equal to their lower bound. Since there is exactly $|f|$ units of flow entering $V_t$ (by flow conservation), then $|A| = |f|$, and thus $A$ is a maximum-sized antichain $A \subseteq U$. Computing $A$ reduces to computing $V_t$, which can be done in $O(|E|)$ time.
\end{proof}

Therefore, a simple strategy to obtain the $k$ \texttt{greedy} antichains is to solve the minimum flow problem defined in \Cref{lem:subsetMPC} repeatedly, where each time $U$ is defined as the vertices still uncovered. Our final result speeds up this computation by noting that a minimum flow obtained in the $i$-th step of \texttt{greedy} can be used as an initial solution for the step $i+1$. A well-known result from the theory of network flows~\cite{williamson2019network} states that for a feasible flow $f$, $f$ is a minimum flow if and only if there is no path from $t$ to $s$ in $G'_f$. A $ts$-path in $G'_f$ is known as a \emph{decrementing path} and it is used to obtain a flow of value $\le |f|-1$. As such, starting from a feasible flow $f$, we can obtain a minimum flow $f^*$ by finding $\le|f|-|f^*|$ decrementing paths in total $O((|f|-|f^*|+1)|E|)$ time.

\begin{lemma}[\Cref{thm:approxAlgo}, part II]\label{lem:approxAlgo2}
    Given a DAG $G = (V, E)$ and a positive integer $k$, there exist $(1-1/e)$-approximation algorithm solving MA-$k$ in time $O(\alpha_1^2|V| + (\alpha_1+k)|E|)$.
\end{lemma}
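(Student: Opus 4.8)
The plan is to realize the maximum-coverage-$k$-sets \texttt{greedy} over the collection of antichains as a sequence of $k$ maximum-sized-antichain computations, and then to amortize these computations across the $k$ steps. The $(1-1/e)$ ratio is free: once each \texttt{greedy} step is shown to pick an antichain covering the largest number of currently uncovered vertices, the bound of~\cite{hochbaum1996approximating} applies verbatim, exactly as in \Cref{lem:approxAlgo1}. Arguing as in the chain case of \Cref{lem:generalizedGreedy}, the $i$-th step with uncovered set $U_i = V \setminus (A_1 \cup \cdots \cup A_{i-1})$ may always be taken to pick an antichain $A_i \subseteq U_i$ (intersecting any antichain with $U_i$ keeps it an antichain), so its task is precisely to find a \emph{maximum-sized antichain contained in $U_i$}. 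This is exactly what \Cref{lem:subsetMPC} delivers from a minimum flow of the subset-MPC network whose lower bounds are set to $1$ on $\{(v^{in},v^{out}) : v \in U_i\}$. Note that $A_i \subseteq U_i$ automatically makes the output $\antichains = \{A_1,\ldots,A_k\}$ pairwise disjoint, hence a valid MA-$k$ candidate.

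First I would compute a minimum path cover of the whole $V$ with the algorithm of Cáceres et~al.~\cite{caceres2022sparsifying,caceres2022minimum} in $O(\alpha_1^2|V| + |E|)$ time; this yields a minimum flow $f_1$ of the $U_1 = V$ network (of value $|f_1| = \alpha_1$, the width) together with, by \Cref{lem:subsetMPC}, the first \texttt{greedy} antichain $A_1$ in $O(|E|)$ time. The key structural observation is that passing from step $i$ to step $i+1$ only \emph{removes} the vertices of $A_i$ from the uncovered set, i.e.\ it only relaxes the lower bound of the edges $(v^{in}, v^{out})$ from $1$ to $0$ for $v \in A_i$. Hence the minimum flow $f_i$ of step $i$ is still a \emph{feasible} flow of the step-$(i+1)$ network, and I would use it as a warm start. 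Using the decrementing-path procedure recalled just before this lemma, I can turn $f_i$ into a minimum flow $f_{i+1}$ in $O((|f_i| - |f_{i+1}| + 1)|E|)$ time, and then read off $A_{i+1}$ in $O(|E|)$ time via \Cref{lem:subsetMPC}.

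For the running time, since $U_1 \supseteq U_2 \supseteq \cdots \supseteq U_k$, covering a smaller set needs no more paths, so the flow values are monotone nonincreasing: $\alpha_1 = |f_1| \ge |f_2| \ge \cdots \ge |f_k| \ge 0$. Summing the per-step costs then telescopes,
\[
\sum_{i=1}^{k-1}\bigl(|f_i| - |f_{i+1}| + 1\bigr) = \bigl(|f_1| - |f_k|\bigr) + (k-1) \le \alpha_1 + k,
\]
so the total flow work across all steps is $O((\alpha_1 + k)|E|)$. Adding the one-time $O(\alpha_1^2|V| + |E|)$ cost of the initial MPC gives the claimed $O(\alpha_1^2|V| + (\alpha_1 + k)|E|)$ bound.

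The main obstacle is precisely this amortization: obtaining an $\alpha_1$ term rather than the naive $k\alpha_1$ requires establishing both that the previous step's optimal flow remains feasible after relaxing the lower bounds (legitimizing the warm start) and the telescoping bound on the total number of decrementing paths, which rests on the monotonicity of the minimum-flow values $|f_i|$ along the shrinking chain of sets $U_i$. Making this rigorous amounts to confirming that each subset-MPC minimum-flow value equals the minimum number of paths covering $U_i$ (guaranteed by \Cref{lem:subsetMPC}) and that these values are nonincreasing. The remaining ingredients—the $(1-1/e)$ guarantee and the $O(|E|)$ per-extraction cost—are immediate from \Cref{lem:approxAlgo1,lem:subsetMPC}.
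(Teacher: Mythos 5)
Your proposal is correct and follows essentially the same route as the paper's proof: compute an initial MPC via Cáceres et~al.\ to get $f_1$ and $A_1$, warm-start each subsequent subset-MPC minimum flow from the previous one (feasible since $U_{i+1} \subseteq U_i$), cancel decrementing paths, and bound the total work by the telescoping sum $O((\alpha_1+k)|E|)$. Your explicit justifications that \texttt{greedy} may restrict its choice to antichains inside $U_i$ and that the flow values are nonincreasing are points the paper leaves implicit, but they are the same argument.
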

\begin{proof}
    We compute a MPC in time $O(\alpha_1^2|V| + (\alpha_1+k)|E|)$~\cite{caceres2022sparsifying,caceres2022minimum}, compute the corresponding minimum flow $f_1$ in the reduction of \Cref{lem:subsetMPC} and obtain the first \texttt{greedy} antichain $A_1$ in $O(|E|)$ time. To obtain the $i$-th \texttt{greedy} antichain $A_i$, for $i > 1$, we assume that $U_{i}$ are the vertices still uncovered in iteration $i$, that is $U_{i} = V \setminus \bigcup_{j = 1}^{i-1} A_j$, and that $f_{i-1}$ is a minimum flow as in \Cref{lem:subsetMPC} for $U_{i-1}$. To compute a minimum flow $f_i$ for $U_i$ we use $f_{i-1}$ as an initial solution. Note that $f_{i-1}$ is a feasible flow in the network for $U_{i}$ since $U_{i} \subseteq U_{i-1}$. Obtaining $f_i$ from $f_{i-1}$, can be done by finding $|f_{i-1}| - |f_{i}|$ decrementing $ts$ paths in the corresponding residual networks, each in $O(|E|)$ time. Once we obtain $f_{i}$, we can obtain $A_{i}$ (and $U_{i}$) in $O(|E|)$ time by the second part of \Cref{lem:subsetMPC}. The total running time after computing the MPC is then $O\left(|E|\sum_{i = 2}^k |f_{i-1}| - |f_{i}|+1 \right) = O(|E|(k + |f_1| - |f_k|)) = O((\alpha_1+k)|E|)$.
\end{proof}

\begin{proof}[Proof of \Cref{thm:approxAlgo}]
    Combine \Cref{lem:approxAlgo1,lem:approxAlgo2}.
\end{proof}

Combining the ideas from \Cref{lem:generalizedGreedy,lem:approxAlgo2} we obtain the following.

\begin{restatable}{lemma}{GeneralizedGreedyAntichains}\label{lem:generalizedGreedyAntichains}
We compute a $\ln{|V|}$ approximation of MAS-$k$ and MAP-$k$ in $\tilde{O}((\alpha_1+\beta_k/k)|E|)$ time.
\end{restatable}

\begin{proof}
    We start by computing an MPC in $\tilde{O}(\alpha_1|E|)$ time~\cite{makinen2019sparse}, and compute \texttt{greedy} antichains as in the proof of \Cref{lem:approxAlgo2} but stopping the computation whenever the next antichain size is at most $k$ (as we did for chains/paths in the proof of \Cref{lem:generalizedGreedy}) obtaining a collection of \texttt{greedy} antichains: $\antichains$. Note that, following the analysis of \Cref{lem:approxAlgo2}, the running time of computing $\antichains$ is $O((\alpha_1+|\antichains|)|E|)$. We next prove that $\antichains$ is a $\ln{|V|}$ approximation for the MAS-$k$ problem. The bound $|\antichains| \le \beta_k\ln{|V|}/k$ follows by noting that each antichain in $\antichains$ contributes exactly $k$ units to its $Sk$-norm. The fact that $\antichains$ is a $\ln{|V|}$ approximation follows by noting, as in the proof of \Cref{lem:generalizedGreedy}, that the implementation \texttt{greedy} over the set of antichains of $G$ with weights $\min(|A|,k)$ for antichain $A$, is exactly the first antichains of (unweighted) \texttt{greedy} until their size becomes at most $k$ (for MAS-$k$) plus singleton antichains for the yet uncovered vertices (for MAP-$k$). The $Sk$-norm of the antichain collection (for MAS-$k$) is equal to the $k$-norm of the antichain partition (for MAP-$k$) by \Cref{lem:relations}.
\end{proof}

\section{Limitations of Greedy on Chains/Paths and Antichains}\label{sec:limitsGreedy}

\begin{restatable}[\Cref{thm:upperBoundCoverage}, part I]{lemma}{lemupperBoundCoverageI}
    For every $k\ge 2$, there exists a DAG, such that $k$ \texttt{greedy} chains cover at most $(1-1/e)\beta_k$ vertices in the worst case, which is tight.
    \label{lem:upperBoundCoverage-I}
\end{restatable}

\begin{figure}[h]
    \centering
    \begin{tikzpicture}[
        x=0.05cm, y=1.2cm, % Scaling: 256 units = ~12.8cm wide
        font=\sffamily\footnotesize
    ]
    
    % --- Setup Rows ---
    \def\RowLength{256}
    \foreach \r in {1,2,3,4} {
        % Draw Row Label
        \node[anchor=east] at (-5, 5-\r) {\textbf{Row \r}};
        % Draw Row Background/Axis
        \draw[gray!30, fill=gray!5] (0, 5-\r-0.4) rectangle (\RowLength, 5-\r+0.4);
        \draw[gray!50] (0, 5-\r) -- (\RowLength, 5-\r);
    }
    
    % --- Path 1 (Red, w=64) ---
    % Sequence: R1 -> R2 -> R3 -> R4
    \fill[red!40] (1, 4-0.3) rectangle (64, 4+0.3) node[midway, black] {$C_1$};
    \fill[red!40] (65, 3-0.3) rectangle (128, 3+0.3) node[midway, black] {$C_1$};
    \fill[red!40] (129, 2-0.3) rectangle (192, 2+0.3) node[midway, black] {$C_1$};
    \fill[red!40] (193, 1-0.3) rectangle (256, 1+0.3) node[midway, black] {$C_1$};
    
    % Arrows P1 (Right Edge -> Left Edge)
    \draw[->, thick, red!80!black] (64, 4) -- (65, 3);
    \draw[->, thick, red!80!black] (128, 3) -- (129, 2);
    \draw[->, thick, red!80!black] (192, 2) -- (193, 1);
    
    % --- Path 2 (Blue, w=48) ---
    % Sequence: R2 -> R3 -> R4 -> R1
    \fill[blue!40] (1, 3-0.3) rectangle (48, 3+0.3) node[midway, black] {$C_2$};
    \fill[blue!40] (49, 2-0.3) rectangle (96, 2+0.3) node[midway, black] {$C_2$};
    \fill[blue!40] (97, 1-0.3) rectangle (144, 1+0.3) node[midway, black] {$C_2$};
    \fill[blue!40] (209, 4-0.3) rectangle (256, 4+0.3) node[midway, black] {$C_2$};
    
    % Arrows P2 (Right Edge -> Left Edge)
    \draw[->, thick, blue!80!black] (48, 3) -- (49, 2);
    \draw[->, thick, blue!80!black] (96, 2) -- (97, 1);
    \draw[->, thick, blue!80!black] (144, 1) -- (209, 4); % Jump back to R1
    
    % --- Path 3 (Green, w=36) ---
    % Sequence: R3 -> R4 -> R1 -> R2
    \fill[green!50!lime] (1, 2-0.3) rectangle (36, 2+0.3) node[midway, black] {$C_3$};
    \fill[green!50!lime] (37, 1-0.3) rectangle (72, 1+0.3) node[midway, black] {$C_3$};
    \fill[green!50!lime] (137, 4-0.3) rectangle (172, 4+0.3) node[midway, black] {$C_3$};
    \fill[green!50!lime] (221, 3-0.3) rectangle (256, 3+0.3) node[midway, black] {$C_3$};
    
    % Arrows P3 (Right Edge -> Left Edge)
    \draw[->, thick, green!40!black] (36, 2) -- (37, 1);
    \draw[->, thick, green!40!black] (72, 1) -- (137, 4); % Jump to R1
    \draw[->, thick, green!40!black] (172, 4) -- (221, 3);
    
    % --- Path 4 (Orange, w=27) ---
    % Sequence: R4 -> R1 -> R2 -> R3
    \fill[orange!50] (1, 1-0.3) rectangle (27, 1+0.3) node[midway, black] {$C_4$};
    \fill[orange!50] (92, 4-0.3) rectangle (118, 4+0.3) node[midway, black] {$C_4$};
    \fill[orange!50] (167, 3-0.3) rectangle (193, 3+0.3) node[midway, black] {$C_4$};
    \fill[orange!50] (230, 2-0.3) rectangle (256, 2+0.3) node[midway, black] {$C_4$};
    
    % Arrows P4 (Right Edge -> Left Edge)
    \draw[->, thick, orange!80!black] (27, 1) -- (92, 4); % Jump to R1
    \draw[->, thick, orange!80!black] (118, 4) -- (167, 3);
    \draw[->, thick, orange!80!black] (193, 3) -- (230, 2);
    
    \end{tikzpicture}
    \caption{Visualization of the graph $G_k$ in \Cref{lem:upperBoundCoverage-I}. Colored blocks represent segments of the \texttt{greedy} paths where $C_i$ is the $i^{th}$ path that \texttt{greedy} takes. For example, using the definition in the proof of \Cref{lem:upperBoundCoverage-I}, $\Gamma_{2,3} = \{ 49,\dots,96\}$ corresponds to the blue block $C_2$ in row 3.}
    \label{fig:G_4}
\end{figure}

\begin{proof}
    We construct a class of graphs $G_k$ for $k \in \mathbb{N}$ and show that \texttt{greedy} covers at most a $1 - (1-1/k)^k$-ratio of vertices of $G_k$, while $\beta_k = |V|$.
    We introduce $k^{k+1}$ vertices that can be covered by $k$ paths of each $k^k$ vertices. This set of paths represents the optimal solution and we call the $i$-th such path the $i$-th row of the graph (see \Cref{fig:G_4}). We label the vertices accordingly: $v_i^j$ with $i \in [k], \, j\in[k^k]$ denotes the $j$-th vertex in the $i$-th row.
    The idea is that \texttt{greedy} will cover a $1/k$ fraction of the remaining uncovered vertices in each row.
    
    The set of vertices $C_i \subseteq V(G_k)$ to be covered by the $i$-th \texttt{greedy} chain ($i \in [k]$) is defined as follows: $C_i = \{ v_j^\ell \mid j \in [k],\,\ell \in \Gamma_{i,j} \}$, where the index set $\Gamma_{i,j}$ is defined as the integers inside the following interval:
    \[ \Gamma_{i,j} = \mathbb{N} \cap
        \begin{cases}
            [(j-i)(1-\frac{1}{k})^{i-1}k^{k-1}+1,\, (j-i+1)(1-\frac{1}{k})^{i-1}k^{k-1}] &\text{if } j \geq i,\\
            [(k+j-i)(1-\frac{1}{k})^{i-1}k^{k-1}+(\sum_{\ell=1}^{j}(1-\frac{1}{k})^{\ell-1})k^{k-1}+1,\\\ \quad(k+j-i+1)(1-\frac{1}{k})^{i-1}k^{k-1}+(\sum_{\ell=1}^{j}(1-\frac{1}{k})^{\ell-1})k^{k-1}] & \text{if } j < i.
        \end{cases}
    \]
    
    We add the edges to $G_k$ that connect consecutive vertices $v_j^\ell$ in $C_i$ sorted in ascending order by $\ell$. Informally, the chains are paths in $G_k$ that look like stairs, where the $i$-th \texttt{greedy} chain starts on the first vertex of row $i$ and loops from row $k$ to row $1$ if $i > 1$. Clearly, $G_k$ is a DAG, because all edges have the form $(v_{i_1}^{j_1}, v_{i_2}^{j_2})$ for some $i_1,i_2 \in [k]$ and $j_2 > j_1$. We will show the following properties:
    \begin{enumerate}
        \item The sets $C_i$ are well-defined and pairwise disjoint,
        \item The set $C_i$ is of size $(1-\frac{1}{k})^{i-1}k^k$ and covers $(1-\frac{1}{k})^{i-1}k^{k-1}$ vertices in each row,
        \item The $i$-th \texttt{greedy} path covers precisely the set $C_i$ in the worst case.
    \end{enumerate}
    Showing these three properties concludes the proof, as the number of paths chosen by \texttt{greedy} divided by the optimal solution is the following:
    \[ \frac{1}{\beta_k}\left|\bigcup_{i\in[k]} C_i\right| = \frac{1}{\beta_k}\sum_{i=1}^k |C_i| = \frac{1}{k}\sum_{i=1}^k \left(1-\frac{1}{k}\right)^{i-1} \xrightarrow{k\to\infty} \left(1-\frac{1}{e}\right).\]

    \textbf{Property 1}. To verify that the sets $C_i$ are well-defined, observe that $\Gamma_{i,j}\subseteq [k^k]$. We now verify that the sets $C_i$ are pairwise disjoint for $k \geq 2$. Consider the sets $C_i$ and $C_{i-1}$ for some $i\in\{2,\dots,k\}$ and a row $j \in [k]$. We show that the intervals in $\Gamma_{i,j}$ are disjoint by analyzing several cases:
    \begin{enumerate}
        \item If $j \geq i$, the chain of $C_i$ is to the left of the chain of $C_{i-1}$ (that is, the maximum value of $C_i$ in row $j$ is smaller than the minimum value of $C_{i-1}$ in row $j$): we have $(j - i + 1)(1-\frac{1}{k})^{i-1}k^{k-1} < (j-i+1)(1-\frac{1}{k})^{i-2}k^{k-1})+1$, since $0 < (1-\frac{1}{k}) < 1$.
        \item If $j < i - 1$, the chain of $C_i$ is also to the left of the chain of $C_{i-1}$: we have $(k+j-i+1)(1-\frac{1}{k})^{i-1}k^{k-1}+(\sum_{\ell=1}^{j}(1-\frac{1}{k})^{\ell-1})k^{k-1} < (k+j-i+1)(1-\frac{1}{k})^{i}k^{k-1}+(\sum_{\ell=1}^{j}(1-\frac{1}{k})^{\ell-1})k^{k-1} + 1$, again, since $0 < (1-\frac{1}{k}) < 1$.
        \item $C_k$ is to the right of $C_1$ on all rows $j\in[k-1]$: We must verify
        \begin{align}
            j\left(1-\frac{1}{k}\right)^{k-1}k^{k-1}+\left(\sum_{\ell=1}^{j}\left(1-\frac{1}{k}\right)^{\ell-1}\right)k^{k-1} + 1 > jk^{k-1},
            \label{eq:ck-right-of-c1}
        \end{align}
        which is true if and only if
        \[
            k^{k-1} \left[ j\left(1-\frac{1}{k}\right)^{k-1} + \sum_{\ell=1}^{j}\left(1-\frac{1}{k}\right)^{\ell-1} - j \right] + 1 > 0.
        \]
        Computing the geometric sum, we have
        \[
            k^{k-1} \left[ k\left(1-\left(1-\frac{1}{k}\right)^j\right) - j\left(1-\left(1-\frac{1}{k}\right)^{k-1}\right) \right] + 1 > 0.
        \]
        It suffices to show that the expression in the bracket is non-negative for all pairs $k > j \geq 1$. This is implied by
        \begin{align*}
            (k-1)\left(1-\left(1-\frac{1}{k}\right)^{j}\right) &\geq j\left(1-\left(1-\frac{1}{k}\right)^{k-1}\right)\\
            \iff \frac{\left(1-\left(1-\frac{1}{k}\right)^{j}\right)}{j} &\geq \frac{\left(1-\left(1-\frac{1}{k}\right)^{k-1}\right)}{k-1},
        \end{align*}
        which is true, because the function $f_k(n) = (1-(1-(1/k))^n)(1/n)$ is decreasing. Hence, \Cref{eq:ck-right-of-c1} is true.
    \end{enumerate}
    The three cases above show that $C_i$ and $C_j$ are disjoint for all $i\neq j \in [k]$.

    \textbf{Property 2}. The size of $C_i$ can be computed by summing the $k$ interval lengths of $\Gamma_{i,j}$. Note that by Property 1, the $C_i$ are pairwise disjoint and that the intervals $\Gamma_{i,j}$ of a path $C_i$ all have equal length.

    \textbf{Property 3}.
    To show that \texttt{greedy} takes the path $C_\ell$, we need to show that $C_\ell$ covers the most uncovered vertices out of any path after \texttt{greedy} took $C_1,\dots,C_{\ell-1}$. After covering the first $\ell-1$ paths and before covering path $\ell$, we say that \texttt{greedy} is at step $\ell$. Given a vertex $v_i^j$, consider the set $\mathcal{R}_{i,j,\ell}$ of vertices that are in the same row, are uncovered at step $\ell$ and come after $v_j^j$:
    \[ \mathcal{R}_{i,j,\ell} = \{ v_i^p \mid p > j,\, v_i^p \text{ uncovered at step $\ell$} \}. \]
    We call an edge $(v_{i_1}^{j_1}, v_{i_2}^{j_2})$ \textit{forward} at step $\ell$ if $|\mathcal{R}_{i_1,j_2,\ell}| > |\mathcal{R}_{i_2,j_2,\ell}|$.
    We show the following Lemma.
    \begin{lemma}
        Assume \texttt{greedy} took the paths $C_1,\dots,C_t$ at steps $1,\dots,t$. Then at step $t+1$, all edges are forward.
        \label{lem:all-edges-forward}
    \end{lemma}
    \begin{proof}
        We prove this fact by induction on $t$. All edges are forward when $t = 1$, before \texttt{greedy} chose a path.
        Assume $t > 0$, and note that edges that remain in the same row are obviously forward. Consider the edges $(v_{i_1}^{j_1}, v_{i_2}^{j_2})$ of a path $C_\ell$ whose tail is contained in a different row than its head (i.e., $j_1 \neq j_2$). 
        If $\ell = t$, then the edges are forward: by the induction hypothesis, they are forward at step $t$, and after covering $C_t$, we have $\mathcal{R}_{i_1,j_1,t} = \mathcal{R}_{i_1,j_1,t+1}$ and $\mathcal{R}_{i_2,j_2,t} = \mathcal{R}_{i_2,j_2,t+1} \setminus C_t$.
        
        Let now $\ell \neq t$. Following the path $C_\ell$, it first is to the left of $C_t$ and then to the right of $C_t$. For example, in \Cref{fig:G_4} the maximum value of $C_2$ in row $2$ is smaller than the minimum value of $C_3$ in row $3$, but it is the other way round in row $3$. This is the only edge of $C_\ell$ that needs attention, as for every other edge $(v_{i_1}^{j_1}, v_{i_2}^{j_2})$ of $C_\ell$ with $j_1 \neq j_2$, we have $|\mathcal{R}_{i_1,j_1,t}| - |\mathcal{R}_{i_2,j_2,t}| = |\mathcal{R}_{i_1,j_1,t+1}| - |\mathcal{R}_{i_2,j_2,t+1}|$.
        
        We consider the case $\ell > t$, as the other case is symmetric. The path $C_\ell$ is to the left of $C_t$ on rows $\ell,\dots,k$ and $1,\dots,t-1$. Consider the intervals $\Gamma_{\ell,t-1}$ and $\Gamma_{\ell,t}$, and note that the latter contains an additional summand $(1-\frac{1}{k})^{t-1}k^{k-1}$, which is equal to the length of $C_t$ in row $t$. Hence, this edge is also forward.
    \end{proof}
    Using \Cref{lem:all-edges-forward}, we now argue that \texttt{greedy} selects the paths $C_i$ in the worst case. Consider the first \texttt{greedy} step, i.e., before any path was chosen by \texttt{greedy}. Every row has the same length $k^k$. Using Property 2, the path $C_1$ also has length $k^k$, and has the same length $k^{k-1}$ in every row. By \Cref{lem:all-edges-forward}, no path is longer than the rows. Thus, in the worst case, \texttt{greedy} chooses $C_1$.
    Using induction, we keep this invariant for every path: in the $i$-th step, every row has the same number of uncovered vertices, the path $C_i$ contains the same number of uncovered vertices, and no path contains more uncovered vertices.
\end{proof}

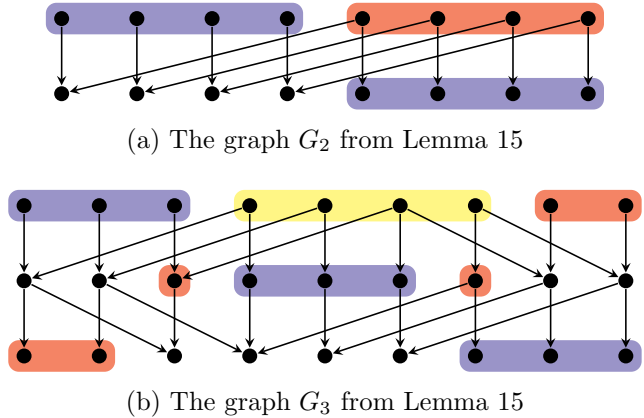
\begin{figure}
    \centering
    \begin{minipage}[b]{\linewidth}
        \centering
        \begin{tikzpicture}
        
            \foreach \i in {1,2,3,4,5,6,7,8} {
                \node[fill=black] (A\i) at (\i-1,1) {};
                \node[fill=black] (B\i) at (\i-1,0) {};
            }
            
            \foreach \i [evaluate=\i as \iplus using int(\i+1)] in {1,...,8} {   
                \draw[->] (A\i) -> (B\i);
            }
            \foreach \i [evaluate=\i as \iplus using int(\i-4)] in {5,...,8} {   
                \draw[->] (A\i) -> (B\iplus);
            }
        
            \begin{pgfonlayer}{background}
                \node[fill=\pathonecolor, draw=none, rectangle, rounded corners, fit=(A1) (A4), inner sep=1mm] (rect) {};
                \node[fill=\pathonecolor, draw=none, rectangle, rounded corners, fit=(B5) (B8), inner sep=1mm] (rect) {};
                \node[fill=\pathtwocolor, draw=none, rectangle, rounded corners, fit=(A5) (A8), inner sep=1mm] (rect) {};
            \end{pgfonlayer}
        
        \end{tikzpicture}
        \captionsetup{justification=centering}
        \subcaption{The graph $G_2$ from \Cref{lem:upperBoundCoverage-II}}
    \end{minipage}
    \\[.4cm]
    \begin{minipage}[b]{\linewidth}
        \centering
        \begin{tikzpicture}
                
            \foreach \i in {1,...,9} {
                \node[fill=black] (A\i) at (\i-1,1) {};
                \node[fill=black] (B\i) at (\i-1,0) {};
                \node[fill=black] (C\i) at (\i-1,-1) {};
            }
            
            \foreach \i [evaluate=\i as \iplus using int(\i+1)] in {1,...,9} {   
                \draw[->] (A\i) -> (B\i);
                \draw[->] (B\i) -> (C\i);
            }
            \draw[->] (A4) -> (B1);
            \draw[->] (A5) -> (B2);
            \draw[->] (A6) -> (B3);
            \draw[->] (A6) -> (B8);
            \draw[->] (A7) -> (B9);
        
            \draw[->] (B1) -> (C3);
            \draw[->] (B2) -> (C4);
            \draw[->] (B7) -> (C4);
            \draw[->] (B8) -> (C5);
            \draw[->] (B9) -> (C6);
        
            \begin{pgfonlayer}{background}
                \node[fill=\pathonecolor, draw=none, rectangle, rounded corners, fit=(A1) (A3), inner sep=1mm] (rect) {};
                \node[fill=\pathonecolor, draw=none, rectangle, rounded corners, fit=(B4) (B6), inner sep=1mm] (rect) {};
                \node[fill=\pathonecolor, draw=none, rectangle, rounded corners, fit=(C7) (C9), inner sep=1mm] (rect) {};
        
                \node[fill=\pathtwocolor, draw=none, rectangle, rounded corners, fit=(C1) (C2), inner sep=1mm] (rect) {};
                \node[fill=\pathtwocolor, draw=none, rectangle, rounded corners, fit=(B3) (B3), inner sep=1mm] (rect) {};
                \node[fill=\pathtwocolor, draw=none, rectangle, rounded corners, fit=(B7) (B7), inner sep=1mm] (rect) {};
                \node[fill=\pathtwocolor, draw=none, rectangle, rounded corners, fit=(A8) (A9), inner sep=1mm] (rect) {};
        
                \node[fill=\paththreecolor, draw=none, rectangle, rounded corners, fit=(A4) (A7), inner sep=1mm] (rect) {};
            \end{pgfonlayer}
        
        \end{tikzpicture}
        \captionsetup{justification=centering}
        \subcaption{The graph $G_3$ from \Cref{lem:upperBoundCoverage-II}}
    \end{minipage}
    \caption{Illustrations of the graphs used in \Cref{lem:upperBoundCoverage-II}. The \texttt{greedy} antichains are obtained in the order blue, red, yellow.\label{fig:upperBoundCoverage-II}}
    
\end{figure}

\begin{restatable}[\Cref{thm:upperBoundCoverage}, part II]{lemma}{UpperBoundCoverageII}\label{lem:upperBoundCoverage-II}
For every $k\ge 2$, there exists a DAG, such that $k$ \texttt{greedy} antichains cover at most $(3/4)\alpha_k$ vertices in the worst case.
\end{restatable}
\begin{proof}
    We divide this proof into two parts: first, we give two instances, for $k=2$ and $k=3$, of DAGs such that $k$ \texttt{greedy} antichains cover at most $(3/4)\alpha_k$ vertices. Secondly, we combine these two instances to create an instance for any k. Let $G_2$ be an instance for $k=2$ and $G_3$ be an instance where $k=3$. See \Cref{fig:upperBoundCoverage-II} for $G_2$ and $G_3$. In the $G_i$ instance, for $i=2,3$, $\alpha_i = |V(G_i)|$ by selecting $i$ rows. \texttt{greedy}, however, covers $8+4 =12$ vertices in $G_2$ by selecting the blue and red antichains in this order, and covers $9+6+4=19$ vertices in $G_3$ by selecting the blue, red, yellow antichains, in this order. This gives us the ratio of $12/16=3/4$ for $k=2$ and $19/27 < 3/4$ for $k=3$.

    To obtain the 3/4 bound for even $k$, we use $\frac{k}{2}$ copies of $G_2$ (called $G_2^{(1)}, G_2^{(2)}, \dots, G_2^{(\frac{k}{2})}$). Then we add edges from each vertex in $G_2^{(i)}$ to each vertex in $G_2^{(i+1)}$ for all $i$. These edges make sure that no antichain contains vertices from different $G_2^{(i)}$. \texttt{Greedy} covers $3/4$ portion of vertices of each $G_2$. For odd $k$, we use $\frac{k-1}{2}$ copies of $G_2$  (called $G_2^{(1)}, G_2^{(2)}, \dots, G_2^{(\frac{k-1}{2})}$) and one copy of $G_3$. Then we again add edges from each vertex in $G_2^{(i)}$ to each vertex in $G_2^{(i+1)}$ for all $i$, and also from each vertex in $G_2^{(\frac{k-1}{2})}$ to each vertex in $G_3$. \texttt{Greedy} covers at most $3/4$ portion of vertices of $G_3$ and each $G_2$, this gives us the final ratio of $3/4$. 
\end{proof}

\begin{proof}[Proof of \Cref{thm:upperBoundCoverage}]
    Combine \Cref{lem:upperBoundCoverage-I,lem:upperBoundCoverage-II}.
\end{proof}

We now prove the first part of \Cref{thm:lowerBoundPartition} by providing a class of graphs such that \texttt{greedy} a $\log_4(|V|)$ factor away from the optimal $\alpha_1 = 2$ for the problems MPC and MCP.
We recursively define a class of DAGs $G^C_i$ for $i\geq 2$, each of which can be covered by two paths. %We also attach weights and consider a version of \texttt{greedy} that maximizes the weight of the paths. However, this is just for visual clarity, as we can easily transform the instances to unweighted ones by replacing the vertices with paths of length equal to the weight of the vertex.

\begin{figure}
    \centering

    \begin{tikzpicture}[scale=1]
    % 1
    \node[fill=black, label=below:1] (A1) at (0,0) {};
    \node[fill=black, label=above:6] (A2) at (1,2) {};
    \node[fill=black, label=below:15] (A3) at (6,0) {};
    \node[fill=black, label=above:20] (A4) at (7,2) {};
    \node[fill=black, label=below:15] (A5) at (12,0) {};
    \node[fill=black, label=above:6] (A6) at (13,2) {};
    \draw[->] (A1) -> (A2);
    \draw[->] (A2) -> (A3);
    \draw[->] (A3) -> (A4);
    \draw[->] (A4) -> (A5);
    \draw[->] (A5) -> (A6);

    % 2
    \node[fill=black, label=below:1] (B1) at (1,0) {};
    \node[fill=black, label=above:5] (B2) at (2,2) {};
    \node[fill=black, label=below:10] (B3) at (7,0) {};
    \node[fill=black, label=above:10] (B4) at (8,2) {};
    \node[fill=black, label=below:5] (B5) at (13,0) {};
    \draw[->] (B1) -> (B2);
    \draw[->] (B2) -> (B3);
    \draw[->] (B3) -> (B4);
    \draw[->] (B4) -> (B5);

    % 3
    \node[fill=black, label=below:1] (C1) at (2,0) {};
    \node[fill=black, label=above:4] (C2) at (3,2) {};
    \node[fill=black, label=below:6] (C3) at (8,0) {};
    \node[fill=black, label=above:4] (C4) at (9,2) {};
    \draw[->] (C1) -> (C2);
    \draw[->] (C2) -> (C3);
    \draw[->] (C3) -> (C4);

    % 4
    \node[fill=black, label=below:1] (D1) at (3,0) {};
    \node[fill=black, label=above:3] (D2) at (4,2) {};
    \node[fill=black, label=below:3] (D3) at (9,0) {};
    \draw[->] (D1) -> (D2);
    \draw[->] (D2) -> (D3);

    % 5
    \node[fill, label=below:1] (E1) at (4,0) {};
    \node[fill, label=above:2] (E2) at (5,2) {};
    \draw[->] (E1) -> (E2);

    % 6
    \node[fill=black, label=below:1] (F1) at (5,0) {};

    % bottom line
    \draw[->,black] (A1) -> (B1);
    \draw[->,black] (B1) -> (C1);
    \draw[->,black] (C1) -> (D1);
    \draw[->,black] (D1) -> (E1);
    \draw[->,black] (E1) -> (F1);
    \draw[->,black] (F1) -> (A3);
    \draw[->,black] (A3) -> (B3);
    \draw[->,black] (B3) -> (C3);
    \draw[->,black] (C3) -> (D3);
    \draw[->,black] (C3) -> (D3);
    \draw[->,black] (D3) -> (A5);
    \draw[->,black] (A5) -> (B5);

    % top line
    \draw[->,black] (A2) -> (B2);
    \draw[->,black] (B2) -> (C2);
    \draw[->,black] (C2) -> (D2);
    \draw[->,black] (D2) -> (E2);
    \draw[->,black] (E2) -> (A4);
    \draw[->,black] (A4) -> (B4);
    \draw[->,black] (B4) -> (C4);
    \draw[->,black] (C4) -> (A6);

    \node[draw=none] (P6) at (0-0.3,0.55-0.3) {\textcolor{black}{$P_6$}};
    \node[draw=none] (P5) at (1-0.3,0.55-0.3) {\textcolor{black}{$P_5$}};
    \node[draw=none] (P4) at (2-0.3,0.55-0.3) {\textcolor{black}{$P_4$}};
    \node[draw=none] (P3) at (3-0.3,0.55-0.3) {\textcolor{black}{$P_3$}};
    \node[draw=none] (P2) at (4-0.3,0.55-0.3) {\textcolor{black}{$P_2$}};
    \node[draw=none] (P1) at (5-0.3,0.55-0.3) {\textcolor{black}{$P_1$}};

    \begin{pgfonlayer}{background}
        \draw[-,\pathonecolor, line width=6pt] (A1.center) -- (A2.center) -- (A3.center) -- (A4.center) -- (A5.center) -- (A6.center);
        \draw[-,\pathtwocolor, line width=6pt] (B1.center) -- (B2.center) -- (B3.center) -- (B4.center) -- (B5.center);
        \draw[-,\paththreecolor, line width=6pt] (C1.center) -- (C2.center) -- (C3.center) -- (C4.center);
        \draw[-,\pathfourcolor, line width=6pt] (D1.center) -- (D2.center) -- (D3.center);
        \draw[-,\pathfivecolor, line width=6pt] (E1.center) -- (E2.center);
        \draw[-,\pathsixcolor, line width=6pt] (5-0.17,0) -- (5+0.17,0);
    \end{pgfonlayer}

    \begin{scope}[shift={(3,-1.5)}]
        \node[fill=black,label=below:{$w$}] (n) at (0,0) {};
        \draw[->] (-0.5,0.5) to (n);
        \draw[->] (-0.5,-0.5) to (n);
        \draw[->] (n) to (0.5,0.5);
        \draw[->] (n) to (0.5,-0.5);
        \node[fill=none,draw=none] (a) at (1,0) {$\Longrightarrow$};    
        \node[fill=black] (a1) at (2,0) {};
        \node[fill=black] (a2) at (3,0) {};
        \node[fill=black] (a3) at (6,0) {};
        \node[fill=black] (a4) at (7,0) {};
        \draw[->,black] (a1) -> (a2);
        \draw[->,black] (a2) -> (4,0);
        \draw[->,black,draw=none] (a2) to node {$\cdots$} (a3) ;
        \draw[->,black] (5,0) -> (a3);
        \draw[->,black] (a3) -> (a4);
        \draw[->] (1.5,0.5) to (a1);
        \draw[->] (1.5,-0.5) to (a1);
        \draw[->] (a4) to (7.5,0.5);
        \draw[->] (a4) to (7.5,-0.5);

        % Curly brace below the nodes
        \draw [decorate,decoration={brace,amplitude=5pt,mirror,raise=.1em}] (a1.south) -- (a4.south) node[rectangle,midway,yshift=-1.2em]{$w$ vertices};
    \end{scope}

    \end{tikzpicture}
    
    \caption{Illustration of the instance $G^C_6$, with $6$ alternating paths shows as different colors. Each vertex is labeled by its weight $w$ (which can be reduced to $w$ vertices as shown in the bottom of the figure). The green path of a single vertex is $P_1$, and the blue path of length $6$ is $P_6$. The vertices are ordered by $P_6[j-1] \to P_5[j-1] \to \dots \to P_j[j-1]$ for every $j = 1,\dots,6$. Moreover, we ensure that $P_j[j-1] \to P_6[j+1]$ for all $j = 1, \dots, 4=6-2$.}
    \label{fig:lowerBoundGreedyPC}
\end{figure}
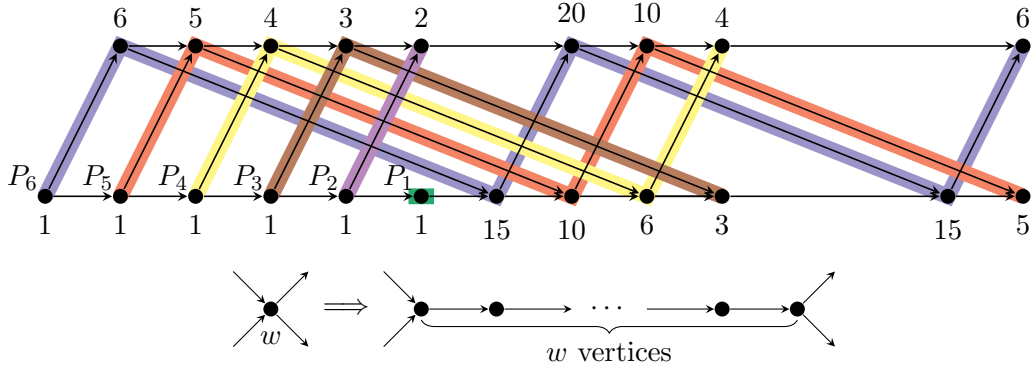
%\begin{figure}
%    \centering
%    \includegraphics[width=0.75\linewidth]{figures/greedyPathsWidth2}
%    \caption{Caption}
%    \label{fig:enter-label}
%\end{figure}
See \Cref{fig:lowerBoundGreedyPC} for an illustration of the DAGs $G^C_i = (V^C_i, E^C_i)$. We use \textit{weights $w$} as a shorthand notation for a path of $w$ vertices. Notice that the optimal solution has size 2 by taking top and bottom paths (we call these \textit{straight paths}). We then add paths with vertices that \texttt{greedy} will take (we call these \textit{alternating paths}). The graph $G^C_i$ has $i$ alternating paths $P_1, \dots, P_i$, which always start from the bottom straight path. The number of times that $P_i$ alternates between the top and bottom paths is $i-1$. We use the notation $P_i = \{ P_i[0], \dots, P_i[i] \}$ separated by each alternation.
The base case graph $G^C_1$ consists of one alternating path $P_1$ consisting of a single node with weight $1$. The $i$-th instance $G^C_i$ is constructed by adding the alternating path $P_i$ whose weights $w(P_i[j]) = \binom{i}{j}$.
The vertices of the paths are chosen to respect the following order: $P_i[j-1] \to P_{i-1}[j-1] \to \dots \to P_j[j-1]$ for all $j = 1,\dots,i$ and, in addition, $P_j[j-1] \to P_i[j+1]$ for all $j = 1,\dots,i-2$.
\begin{restatable}[\Cref{thm:lowerBoundPartition}, part I]{lemma}{LowerBoundPartitionPartI}
    For MPC and MCP, the number of chains/paths taken by \texttt{greedy} on the instances $G^C_i$ is a $\log_4(|V|)$ factor away from the optimal $\alpha_1 = 2$. \label{lem:lowerBoundPartition-partI}
\end{restatable}
\begin{proof}
    \texttt{Greedy} returns equivalent solutions for MPC and MCP. 
    We show the following:
    \begin{enumerate}
        \item In $G^C_i$, \texttt{greedy} chooses the alternating path $P_i$ in its first iteration, \label{stm:greedy_path_chooses_largest_path}
        \item $TC(G^C_i) \setminus P_i = TC(G^C_{i-1})$ for all $i > 1$. \label{stm:greedy_path_recurse}
    \end{enumerate}
    The second statement implies that after removing the vertices of $P_i$ from $TC(G^C_i)$\footnote{In the DAG $G$ vertices can be traversed by \texttt{greedy} multiple times in order to reach uncovered vertices, but in the transitive closure $TC(G)$, this is not necessary.}, the set of the remaining edges corresponds to exactly all the paths in $G^C_{i-1}$. As a result, we can recursively use Statement \ref{stm:greedy_path_chooses_largest_path} and \ref{stm:greedy_path_recurse}, to show that the paths $P_i$ are each taken by \texttt{greedy}.
    
    In the unweighted version of $G^C_i$, the number of vertices is the sum of all the weights. That is, $w(V^C_i) = |V^C_i| = \sum_{j = 1}^i |P_j|=\sum_{j = 1}^i 2^{j}-1 < 2^{i+1}$, which implies that the number of \texttt{greedy} paths is $i \geq \log_2(|V^C_i|)$. Dividing by $\alpha_1 = 2$, we obtain $\log_4(|V^C_i|)$ for the factor.

    It remains to show both statements. We first show Statement \ref{stm:greedy_path_chooses_largest_path}. Let $i \geq 2$ and let $Q = \{ Q[0], \dots, Q[|Q|-1] \}$ be the first path chosen by \texttt{greedy}. We will show that $P_i$ is the unique largest weight path in $G^C_i$ and thus, $Q = P_i$. To observe this, we first note that $P_i[0]$ is the unique source node of the DAG, which implies that $Q[0] = P_i[0]$. Assume that $Q \neq P_i$. We will locally change $Q$ until $Q = P_i$ and strictly increase its weight in the process.
    Let $j$ be the smallest index of $Q$ such that $Q[j] \neq P_i[j]$ %(we assume $|Q| \geq j+1$, because otherwise we can just extend $Q$). \andicom{I need to carefully check again if all the indices are correct :)}
    \begin{itemize}
        \item If $j = i-1$ ($P_i[j]$ last vertex of $P_i$), since $P_i$ is the longest alternating path, $|Q| = i$ and $Q[j] = P_{i-1}[j-1]$. We reroute $Q$ to $P_i[j]$, thus $w(P_{i-1}[j-1]) = \binom{i-1}{j-1} < \binom{i}{j} = w(P_i[j])$.
        \item If $j < i-1$, we reroute $Q$ by passing it through $P_i[j]$, and reconnect it again with a later vertex from $Q$ in the following way:
        \begin{itemize}
            \item If $Q$ passes through $P_i[j+1]$, it did not traverse an alternating edge, and we reroute $Q$ through $P_i[j]$ by using the alternating edge to $P_i[j+1]$. In this case, we replace \[ Q = \{ P_i[0], \dots, P_i[j-1], \mathbf{P_{i-1}[j-1], \dots, P_{j}[j-1]}, P_i[j+1], \dots \} \] by \[ Q = \{ P_i[0], \dots, P_i[j-1], \mathbf{P_i[j]}, P_i[j+1], \dots \}. \] We have \[ w(P_{i-1}[j-1]) + \dots + w(P_{j}[j-1]) = \binom{i-1}{j-1} + \dots + \binom{j}{j-1} < \binom{i}{j} = w(P_i[j]). \]
            \item If $Q$ does not pass through $P_i[j+1]$, we reroute $Q$ through $P_i[j]$ using the unique straight path to reconnect to $Q$. In this case, we replace \[ Q = \{ P_i[0], \dots, P_i[j-1], \mathbf{P_{i-1}[j-1], \dots, P_{i-\ell}[j-1]}, P_{i-\ell}[j], \dots \} \] where $\ell \leq i-j+1$, by \[ Q = \{ P_i[0], \dots, P_i[j-1], \mathbf{P_i[j], P_{i-1}[j], \dots}, P_{i-\ell}[j], \dots \}. \] We have \[ w(P_{i-1}[j-1]) + \dots + w(P_{i-\ell}[j-1]) = \binom{i-1}{j-1} + \dots + \binom{i-\ell}{j-1} < \binom{i}{j} = w(P_i[j]). \]
        \end{itemize}
    \end{itemize}
    We proceed with growing index $j$ until we reach the last vertex of $P_i$. The weight of $Q$ strictly increased after every reroute, and we thus showed that $P_i$ is the unique largest weight path in $G^C_i$. 
    Next, we show Statement \ref{stm:greedy_path_recurse}. Clearly, $TC(G^C_{i-1}) \subseteq TC(G^C_i) \setminus P_i$. To show the opposite inclusion, let $u,v \in V^C_{i-1}$, such that the edge $(u,v)$ is in $TC(G^C_i) \setminus P_i$. Let $u = P_j[\ell]$ and $v = P_k[\ell']$, where $j,k \leq i-1$. We show that there exists a $uv$-path in $G^C_{i-1}$, which shows that the edge $(u,v)$ is also in $TC(G^C_{i-1})$.
    Assume that $u$ and $v$ are covered by different straight paths (otherwise, the straight path is the $uv$-path we are looking for). Then we can consider either $u' = P_j[\ell+1]$ if $j > \ell+1$, or $u' = P_{i-1}[\ell+3]$ if $j = \ell+1$. %In both cases, $u'$ and $v$ are covered by the same straight path, and $\ell+1 \leq \ell'$ if $|P_j| > \ell+1$ or $\ell+3 \leq \ell'$ if $|P_j| = \ell+1$.
    In both cases, $u'$ is the left most vertex on the same straight path as $v$ satisfying that $(u,u')$ is an edge in $TC(G^C_{i-1})$. Thus, also $(u', v)$ is an edge in $TC(G^C_{i-1})$.
\end{proof}

    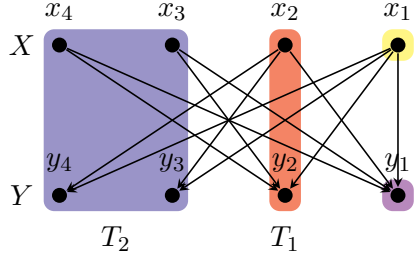
\begin{figure}[t]
        \centering
    %     \includegraphics[width=0.75\linewidth]{draftImages/lowerBoundGreedyAC.png}
    %     \caption{Caption}
    %     \label{fig:lowerBoundGreedyAC}
    % \end{figure}
    % \begin{figure}
    %     \centering
        % \includegraphics[width=0.5\linewidth]{figures/greedyAntichainsHeight2}
        \begin{tikzpicture}[scale=1]       

        % % Add a rounded corner rectangle
        % \draw[fill=\pathonecolor, rounded corners=8pt, draw=none] (1.2, -0.3) rectangle node[rectangle,above,yshift=-2cm] {$T_2$} (3.3, 2.3);
        
        % \draw[fill=\pathtwocolor, rounded corners=8pt, draw=none] (4.2, -0.3) rectangle node[rectangle,above,yshift=-2cm] {$T_1$} (4.8, 2.3);
        % \draw[fill=\paththreecolor, rounded corners=8pt, draw=none] (5.7, -0.3) rectangle (6.3, 0.3);
        % \draw[fill=\pathfivecolor, rounded corners=8pt, draw=none] (5.7, 1.7) rectangle (6.3, 2.3);
    
        \foreach \i [evaluate=\i as \ieval using 4-\i+1] in {1,2,3,4} {
            \node[fill=black,label=above:{$x_{\pgfmathtruncatemacro{\yet}{\ieval}\yet}$}] (A\i) at (1.5*\i,2) {};
            \node[fill=black,label=above:{$y_{\pgfmathtruncatemacro{\yet}{\ieval}\yet}$}] (B\i) at (1.5*\i,0) {};
        }
        \foreach \i in {1,2} {
            \draw[->] (A\i) -> (B3);
            \draw[->] (A\i) -> (B4);
            \draw[->] (A3) -> (B\i);
            \draw[->] (A4) -> (B\i);
        }
        \draw[->] (A3) -> (B4);
        \draw[->] (A4) -> (B3);
        \draw[->] (A4) -> (B4);

        \node[fill=none,draw=none] (A) at (1,2) {$X$};
        \node[fill=none,draw=none] (B) at (1,0) {$Y$};

        \begin{pgfonlayer}{background}
            \node[fill=\pathonecolor, draw=none, rectangle, rounded corners, fit=(A1) (B2), inner sep=1mm, label=below:$T_2$] (rect) {};
    
            \node[fill=\pathtwocolor, draw=none, rectangle, rounded corners, fit=(A3) (B3), inner sep=1mm, label=below:$T_1$] (rect) {};
    
            \node[fill=\paththreecolor, draw=none, rectangle, rounded corners, fit=(A4) (A4), inner sep=1mm] (rect) {};

            \node[fill=\pathfivecolor, draw=none, rectangle, rounded corners, fit=(B4) (B4), inner sep=1mm] (rect) {};
        \end{pgfonlayer}
    
        \end{tikzpicture}
        \caption{Illustration of the graph $G^A_2$ from \Cref{lem:lowerBoundPartition-partII}.}
        \label{fig:lowerBoundGreedyAC}
    \end{figure}

\begin{restatable}[\Cref{thm:lowerBoundPartition}, part II]{lemma}{LowerBoundPartitionPartII}\label{lem:lowerBoundPartition-partII}
For MAP, there exists a class of DAGs of increasing size, such that the number of antichains taken by \texttt{greedy} is a $\log_4(|V|)$ factor away from the optimal $\beta_1 = 2$ in the worst case.
\end{restatable}
\begin{proof}
    We construct a class of instances $G^A_i = (V^A_i, E^A_i)$ such that $|V^A_i| = 2 \cdot 2^i$, $\beta_1 = 2$ and \texttt{greedy} takes $i+2$ antichains. This implies the factor of $(i+2) / 2 \geq \log_4(|V^A_i|)$. See \Cref{fig:lowerBoundGreedyAC} for an illustration. The undirected graph underlying the DAG $G^A_i$ is bipartite, with $V^A_i = X \cup Y$, $X = \{ x_1, \dots, x_{2^i} \}$ and $Y = \{ y_1, \dots, y_{2^i} \}$. Since the graph is bipartite, $X$ and $Y$ are both antichains that correspond to the optimal solution. We add edges from $X$ to $Y$, such that the following sets become antichains for $1 \leq k \leq i$: $T_k = \{ x_j, y_j \mid 2^{k-1} < j \leq 2^k \}$. We let $(x_j, y_{j'}) \in E^A_i$ if and only if they are not both in the same set $T_k$ (i.e., $|T_k \cap \{ x_j, y_{j'} \}| \leq 1$ for all $1 \leq k \leq i$). Note that if a non-singleton set $U \subseteq V^A_i$ is an antichain, then it is a subset of some $T_k$, $X$ or $Y$. It can easily be verified by induction that \texttt{greedy} chooses the sets $T_k$, from $k=i$ to $1$, with two singleton vertices $x_1$ and $y_1$ left over, that are chosen as two new antichains, hence $i+2$ antichains in total.
\end{proof}

Note that \texttt{greedy}'s worst case solutions are not unique in \Cref{lem:upperBoundCoverage-I}, \Cref{lem:upperBoundCoverage-II} and \Cref{lem:lowerBoundPartition-partII}, however, they are unique in \Cref{lem:lowerBoundPartition-partI}.

\begin{proof}[Proof of \Cref{thm:lowerBoundPartition}]
    Combine \Cref{lem:lowerBoundPartition-partI,lem:lowerBoundPartition-partII}.
\end{proof}

\bibliography{references}

\begin{thebibliography}{10}

\bibitem{abboud2016approximation}
Amir Abboud, Virginia~Vassilevska Williams, and Joshua Wang.
\newblock Approximation and fixed parameter subquadratic algorithms for radius and diameter in sparse graphs.
\newblock In {\em Proceedings of the 27th Annual ACM-SIAM Symposium on Discrete Algorithms}, pages 377--391. SIAM, 2016.

\bibitem{ahuja1993network}
Ravindra~K Ahuja, Thomas~L Magnanti, and James~B Orlin.
\newblock {\em Network Flows: Teory, Algorithms, and Applications}.
\newblock Prentice-Hall, 1993.

\bibitem{bernstein2022negative}
Aaron Bernstein, Danupon Nanongkai, and Christian Wulff-Nilsen.
\newblock Negative-weight single-source shortest paths in near-linear time.
\newblock In {\em IEEE 63rd annual symposium on foundations of computer science (FOCS 2022)}, pages 600--611. IEEE, 2022.

\bibitem{bringmann2023negative}
Karl Bringmann, Alejandro Cassis, and Nick Fischer.
\newblock Negative-weight single-source shortest paths in near-linear time: Now faster!
\newblock In {\em IEEE 64th Annual Symposium on Foundations of Computer Science (FOCS 2023)}, pages 515--538. IEEE, 2023.

\bibitem{caceres2023minimum}
Manuel C{\'a}ceres.
\newblock Minimum chain cover in almost linear time.
\newblock In {\em 50th International Colloquium on Automata, Languages, and Programming (ICALP 2023)}. Schloss Dagstuhl-Leibniz-Zentrum f{\"u}r Informatik, 2023.

\bibitem{caceres2022parameterized}
Manuel C{\'{a}}ceres.
\newblock Parameterized algorithms for string matching to {DAG}s: Funnels and beyond.
\newblock In Laurent Bulteau and Zsuzsanna Lipt{\'{a}}k, editors, {\em 34th Annual Symposium on Combinatorial Pattern Matching, {CPM} 2023, June 26-28, 2023, Marne-la-Vall{\'{e}}e, France}, volume 259 of {\em LIPIcs}, pages 7:1--7:19, Dagstuhl, Germany, 2023. Schloss Dagstuhl - Leibniz-Zentrum f{\"{u}}r Informatik.

\bibitem{caceres2021a}
Manuel C{\'a}ceres, Massimo Cairo, Brendan Mumey, Romeo Rizzi, and Alexandru~I Tomescu.
\newblock A linear-time parameterized algorithm for computing the width of a {DAG}.
\newblock In {\em International Workshop on Graph-Theoretic Concepts in Computer Science}, pages 257--269. Springer, 2021.

\bibitem{caceres2022minimum}
Manuel Caceres, Massimo Cairo, Brendan Mumey, Romeo Rizzi, and Alexandru~I Tomescu.
\newblock Minimum path cover in parameterized linear time.
\newblock {\em arXiv preprint arXiv:2211.09659}, 2022.

\bibitem{caceres2022sparsifying}
Manuel C{\'a}ceres, Massimo Cairo, Brendan Mumey, Romeo Rizzi, and Alexandru~I Tomescu.
\newblock Sparsifying, shrinking and splicing for minimum path cover in parameterized linear time.
\newblock In {\em Proceedings of the 33rd Annual ACM-SIAM Symposium on Discrete Algorithms (SODA 2022)}, pages 359--376. SIAM, 2022.

\bibitem{caceres2024practical}
Manuel C{\'a}ceres, Brendan Mumey, Santeri Toivonen, and Alexandru~I Tomescu.
\newblock Practical minimum path cover.
\newblock In {\em International Symposium on Experimental Algorithms (SEA 2024)}, pages 1--19. Schloss Dagstuhl-Leibniz-Zentrum f{\"u}r Informatik, 2024.

\bibitem{chen2022maximum}
Li~Chen, Rasmus Kyng, Yang~P Liu, Richard Peng, Maximilian~Probst Gutenberg, and Sushant Sachdeva.
\newblock Maximum flow and minimum-cost flow in almost-linear time.
\newblock In {\em Proceedings of the 63rd IEEE Annual Symposium on Foundations of Computer Science (FOCS 2022)}, pages 612--623. IEEE, 2022.

\bibitem{chvatal1979greedy}
Vasek Chvatal.
\newblock A greedy heuristic for the set-covering problem.
\newblock {\em Mathematics of operations research}, 4(3):233--235, 1979.

\bibitem{dilworth1987decomposition}
Robert~P Dilworth.
\newblock A decomposition theorem for partially ordered sets.
\newblock {\em Classic Papers in Combinatorics}, pages 139--144, 1987.

\bibitem{feige1998threshold}
Uriel Feige.
\newblock A threshold of ln n for approximating set cover.
\newblock {\em Journal of the {ACM}}, 45(4):634--652, 1998.

\bibitem{felsner2003recognition}
Stefan Felsner, Vijay Raghavan, and Jeremy Spinrad.
\newblock Recognition algorithms for orders of small width and graphs of small dilworth number.
\newblock {\em Order}, 20:351--364, 2003.

\bibitem{fomin2018fully}
Fedor~V Fomin, Daniel Lokshtanov, Saket Saurabh, Micha{\l} Pilipczuk, and Marcin Wrochna.
\newblock Fully polynomial-time parameterized computations for graphs and matrices of low treewidth.
\newblock {\em ACM Transactions on Algorithms (TALG)}, 14(3):1--45, 2018.

\bibitem{frank1980chain}
Andr{\'a}s Frank.
\newblock On chain and antichain families of a partially ordered set.
\newblock {\em Journal of Combinatorial Theory, Series B}, 29(2):176--184, 1980.

\bibitem{fulkerson1956note}
Delbert~R Fulkerson.
\newblock Note on {D}ilworth's decomposition theorem for partially ordered sets.
\newblock {\em Proceedings of the American Mathematical Society}, 7(4):701--702, 1956.

\bibitem{gavril1987algorithms}
Fǎnicǎ Gavril.
\newblock Algorithms for maximum k-colorings and k-coverings of transitive graphs.
\newblock {\em Networks}, 17(4):465--470, 1987.

\bibitem{giannopoulou2017polynomial}
Archontia~C Giannopoulou, George~B Mertzios, and Rolf Niedermeier.
\newblock Polynomial fixed-parameter algorithms: {A} case study for longest path on interval graphs.
\newblock {\em Theoretical Computer Science}, 689:67--95, 2017.

\bibitem{greene1976some}
Curtis Greene.
\newblock Some partitions associated with a partially ordered set.
\newblock {\em Journal of Combinatorial Theory, Series A}, 20(1):69--79, 1976.

\bibitem{greene1976structure}
Curtis Greene and Daniel~J Kleitman.
\newblock The structure of {S}perner k-families.
\newblock {\em Journal of Combinatorial Theory, Series A}, 20(1):41--68, 1976.

\bibitem{haeupler2025deterministic}
Bernhard Haeupler, Yonggang Jiang, and Thatchaphol Saranurak.
\newblock Deterministic negative-weight shortest paths in nearly linear time via path covers.
\newblock {\em arXiv preprint arXiv:2511.08551}, 2025.

\bibitem{himmel2024fast}
Anne-Sophie Himmel, George~B Mertzios, Andr{\'e} Nichterlein, and Rolf Niedermeier.
\newblock Fast parameterized preprocessing for polynomial-time solvable graph problems.
\newblock {\em Communications of the ACM}, 67(4):70--79, 2024.

\bibitem{hochbaum1996approximating}
Dorit~S Hochbaum.
\newblock Approximating covering and packing problems: set cover, vertex cover, independent set, and related problems.
\newblock In {\em Approximation algorithms for NP-hard problems}, pages 94--143. PWS Publishing Company, 1996.

\bibitem{kogan2022beating}
Shimon Kogan and Merav Parter.
\newblock Beating matrix multiplication for $n^{1/3}$-directed shortcuts.
\newblock In {\em 49th International Colloquium on Automata, Languages, and Programming (ICALP 2022)}. Schloss-Dagstuhl-Leibniz Zentrum f{\"u}r Informatik, 2022.

\bibitem{kogan2024algorithmic}
Shimon Kogan and Merav Parter.
\newblock {The Algorithmic Power of the Greene-Kleitman Theorem}.
\newblock In {\em 32nd Annual European Symposium on Algorithms (ESA 2024)}. Schloss Dagstuhl--Leibniz-Zentrum f{\"u}r Informatik, 2024.

\bibitem{kowaluk2008path}
Miros{\l}aw Kowaluk, Andrzej Lingas, and Johannes Nowak.
\newblock {A path cover technique for LCAs in dags}.
\newblock In {\em Scandinavian Workshop on Algorithm Theory}, pages 222--233. Springer, 2008.

\bibitem{li2025deterministic}
Jason Li.
\newblock Deterministic padded decompositions and negative-weight shortest paths.
\newblock {\em arXiv preprint arXiv:2511.07859}, 2025.

\bibitem{makinen2019sparse}
Veli M{\"a}kinen, Alexandru~I Tomescu, Anna Kuosmanen, Topi Paavilainen, Travis Gagie, and Rayan Chikhi.
\newblock Sparse dynamic programming on {DAG}s with small width.
\newblock {\em ACM Transactions on Algorithms}, 15(2):1--21, 2019.

\bibitem{mirsky1971dual}
Leon Mirsky.
\newblock A dual of {D}ilworth's decomposition theorem.
\newblock {\em The American Mathematical Monthly}, 78(8):876--877, 1971.

\bibitem{ntafos1979path}
Simeon~C Ntafos and S~Louis Hakimi.
\newblock On path cover problems in digraphs and applications to program testing.
\newblock {\em IEEE Transactions on Software Engineering}, 5(5):520--529, 1979.

\bibitem{saks1979short}
Michael Saks.
\newblock A short proof of the existence of k-saturated partitions of partially ordered sets.
\newblock {\em Advances in Mathematics}, 33(3):207--211, 1979.

\bibitem{schrijver2003combinatorial}
Alexander Schrijver.
\newblock {\em Combinatorial optimization: polyhedra and efficiency}, volume~24.
\newblock Springer, 2003.

\bibitem{slavik1996tight}
Petr Slav{\'\i}k.
\newblock A tight analysis of the greedy algorithm for set cover.
\newblock In {\em Proceedings of the twenty-eighth annual ACM symposium on Theory of computing (STOC 1996)}, pages 435--441, 1996.

\bibitem{van2023deterministic}
Jan Van Den~Brand, Li~Chen, Richard Peng, Rasmus Kyng, Yang~P Liu, Maximilian~Probst Gutenberg, Sushant Sachdeva, and Aaron Sidford.
\newblock A deterministic almost-linear time algorithm for minimum-cost flow.
\newblock In {\em Proceedings of the 64th IEEE Annual Symposium on Foundations of Computer Science (FOCS 2023)}, pages 503--514. IEEE, 2023.

\bibitem{williamson2019network}
David~P Williamson.
\newblock {\em Network flow algorithms}.
\newblock Cambridge University Press, 2019.

\end{thebibliography}

\appendix

\newpage

\section{Proof of the Greene-Kleitman Theorems}\label{appendix:GKProof}

\GKCollections*
\begin{proof}
This proof mimics the proofs of ~\cite[Theorem 14.8 \& Theorem 14.10]{schrijver2003combinatorial} for the case of paths/antichains of vertices. We first prove the inequality $\le$, which holds for any $k$ antichains/$k$ paths and any path/antichain collection. For the $\alpha_k$ version, let $\antichains$ be $k$ antichains, $|\antichains| = k$, and $\paths$ a collection of paths. Then, note that 
\begin{align*}
    |V(\antichains)|    &\le |V\setminus V(\paths)| + |V(\paths) \cap V(\antichains)| \\
        &\le |V\setminus V(\paths)| + \sum_{P\in \paths} \sum_{A\in\antichains}|P \cap A| \\
        &\le |V\setminus V(\paths)| + k|\paths|
\end{align*}
where the last inequality follows since a path and an antichain can intersect in at most one vertex, i.e., $|P\cap A| \le 1$. The $\beta_k$ version of the inequality follows by the same argument (by exchanging the roles of $\paths$ and $\antichains$). To prove that the values meet in the optimum we use the $\alpha_k$ and $\beta_k$ networks described in \Cref{sec:preliminaries}. Consider a minimum cost circulation $f$ in one of the networks. As argued in the main paper, if $f$ is minimum in the $\alpha_k$ network, then we can obtain a MPS-$k$ $\paths_f$ by decomposing $f$, and if $f$ is minimum in the $\beta_k$ network, then we can obtain a MP-$k$ $\paths_f$ by decomposing $f$. In both networks, consider the function $d: V' \rightarrow \ints$ such that $d(v')$ is the distance of a shortest (minimum cost) path from $s$ to $v'$ in the residual $G'_f$. First, note that since $f$ is minimum, there are not negative cost cycles in $G'_f$ and thus $d$ is well defined. Moreover, $d(s) = 0$ by definition, and $d(v') \le 0$ for all $v'\in V'$ since there is always a path $P = s, v^{in}, v^{out}, t$ (using $e_v^2$) in $G'_f$ such that all edge costs are $0$ and $v' \in P$. Furthermore, since $d$ is the shortest path distance from $s$ in $G'_f$, it follows that for all $e' = (u', v') \in E'_f$, $d(v') \le d(u') + c(e')$. In particular, if both an edge $e' = (u', v')$ and its reverse are in $E'_f$, then $d(v') = d(u') + c(e')$. For our networks, this implies:

\begin{enumerate}
    \item $d(v') \le d(u')$ if $e' = (u', v') \in E'\setminus (\{e^1_v \mid v \in V\}\cup \{(t,s)\})$, and $d(v') = d(u')$ if $f(e') \ge 1$.
    \par Cost in such edges is $c(e') = 0$ and its reverse is in $E'_f$ if and only if $f(e') \ge 1$.
    \item $d(v^{out}) \le d(v^{in}) - 1$ if $f(e^1_v) = 0$, and $d(v^{out}) \ge d(v^{in}) - 1$ if $f(e^1_v) = 1$.
    \par Cost in such edges is $c(e^1_v) = -1$ and $e^1_v \in E'_f \Leftrightarrow f(e^1_v) = 0 \Leftrightarrow$ reverse of $e^1_v \not\in E'_f$.
    \item For the $\alpha_k$ network, $d(s) = d(t) + k$, since we assume $f(t, s) \ge 1$: otherwise, there is no path of length more than $k$ (one such path would reduce $c(f)$) and by Mirsky's theorem~\cite{mirsky1971dual} a MAP is a MA-$k$ whose coverage ($\alpha_k = |V|$) matches the $Sk$-norm of $\paths_f = \emptyset$.
    \item For the $\beta_k$ network, we assume $f(t,s) = k$: otherwise we can redefine $f$ to be a different minimum cost circulation with the same cost and $f(t,s) = k$ (e.g. pick arbitrary $v\in V$ and push $k-f(t,s)$ units of flow on $(s, v^{in}), e^2_v, (v^{out}, t)$ and $(t,s)$ at total cost $0$).
\end{enumerate}

We define the following subset of vertices $U_i := \{v' \in V' \mid d(v') \ge i + d(t)\}$ for $i \in \{1, \ldots, d(s) - d(t)\}$. By definition, $U'_i$ is an $st$-cut ($s\in U'_i$, $t\not\in U'_i$). Moreover, there are no edges, other than $(t,s)$ going into $U'_i$ in $G'$: indeed, if $(u', v') \in E'\setminus \{(t,s)\}, u' \not\in U'_i, v'\in U'_i$, then by \textbf{1.} and \textbf{2.}, $d(v') \le d(u')$, but also $d(u') < i + d(t)$ and $d(v') \ge i + d(t)$ implying $d(u') < d(v')$, a contradiction. We use $U_i$ to define subsets of vertices of the input DAG $G$, $A_i = \{v\in V\mid v^{in} \in U'_i \land v^{out} \not\in U'_i\}$. In fact, $A_i$ is an antichain: if $u\neq v \in A$ are such that there is a $uv$ path in $G$, then there is a $u^{out}v^{in}$ path in $G'\setminus \{(t,s)\}$, but such a path would cross from $V'\setminus U_i$ to $U_i$, a contradiction. We define $\antichains_f = \{A_{1}, \ldots, A_{d(s)-d(t)}\}$ to be the collection of antichains extracted from $f$. Note that in the $\alpha_k$ network, by \textbf{3.}, $|\antichains_f| = k$. And in the $\beta_k$ network, by \textbf{4.}, $|\paths_f| = k$. These collections satisfy:

\begin{enumerate}[a.]
    \item $V\setminus V(\paths_f) \subseteq V(\antichains_f)$: indeed, if $v \in V\setminus V(\paths_f)$, then $f(e^1_v) = 0$ and, by \textbf{2.}, $d(v^{out}) \le d(v^{in}) - 1$. But then, $v \in A_{d(v^{in}) - d(t)} \subseteq V(\antichains_f)$. Note that, $1 \le d(v^{out}) + 1 - d(t) \le d(v^{in}) - d(t) \le d(s) - d(t)$ since both $(s, v^{in}), (v^{out}, t) \in E'_f$.
    \item Thus also $V\setminus V(\antichains_f) \subseteq V(\paths_f)$.
\end{enumerate}

For the $\alpha_k$ network we have:
\begin{align*}
    k|\paths_f| &=\\
   \text{(by \textbf{3.})}  &= (d(s) - d(t))f(t,s)\\
  \text{($f$ is circulation)}  &= \sum_{e'=(u',v')\in E'\setminus\{(t,s)\}} (d(u')-d(v'))f(e')\\
    &= \sum_{e'=(u',v')\in E'\setminus (\{e^1_v \mid v \in V\}\cup \{(t,s)\})\})} (d(u')-d(v'))f(e') \\
    &+ \sum_{e^1_v, v \in V} (d(v^{in})-d(v^{out}))f(e^1_v)\\
  \text{(by \textbf{1.})}  &= \sum_{e^1_v, v \in V} (d(v^{in})-d(v^{out}))f(e^1_v) \\
  \text{(by \textbf{2.})}& = |\{v \in V \mid f(e^1_v) = 1 \land d(v^{in})-d(v^{out}) = 1\}|   \\
  &\le |V(\paths_f) \cap V(\antichains_f)|
\end{align*}

For the $\beta_k$ network we have:
\begin{align*}
    k|\antichains_f| &=\\
    \text{(by \textbf{4.})} &= f(t,s) (d(s) - d(t)) \\
    \text{(previous derivation)}&\le   |V(\paths_f) \cap V(\antichains_f)|
\end{align*}

For $\alpha_k$ the proof concludes by:
\begin{align*}
    |V\setminus V(\paths_f)| + k |\paths_f| &=\\
    \text{(by \textbf{a.})} &= |V(\antichains_f)\setminus V(\paths_f)| + k |\paths_f| \\
    \text{(derivation for $\alpha_k$ network)} &\le |V(\antichains_f)\setminus V(\paths_f)| + |V(\paths_f) \cap V(\antichains_f)| \\
    &= |V(\antichains_f)|
\end{align*}

Analogously for $\beta_k$:
\begin{align*}
    |V\setminus V(\antichains_f)| + k |\antichains_f| &=\\
    \text{(by \textbf{b.})} &= |V(\paths_f)\setminus V(\antichains_f)| + k |\antichains_f| \\
    \text{(derivation for $\beta_k$ network)} &\le |V(\paths_f)\setminus V(\antichains_f)| + |V(\paths_f)\cap V(\antichains_f)| \\
    &= |V(\paths_f)|
\end{align*}
\end{proof}

\end{document}